\definecolor{colorhkust}{RGB}{20,43,140}
\definecolor{colortsinghua}{RGB}{116,52,129}
\definecolor{color1}{RGB}{128,0,0}
\newtheorem{thm}{Theorem}
\newtheorem{lem}{Lemma}
\newtheorem{prop}{Proposition}
\newtheorem{cor}{Corollary}
\theoremstyle{definition}
\newtheorem{defn}{Definition}
\newtheorem{exmp}{Example}
\theoremstyle{remark}
\begin{document}

        \title{Graph Neural Networks for Wireless Communications: From Theory to Practice} 
   
      \author{Yifei Shen, \textit{Graduate Student Member}, \textit{IEEE}, Jun Zhang, \textit{Fellow}, \textit{IEEE}, S.H. Song, \textit{Senior Member}, \textit{IEEE}, and Khaled B. Letaief, \textit{Fellow}, \textit{IEEE}
      \thanks{The materials in this paper were presented in part at the IEEE International Conference on Communications (ICC), 2022 \cite{shen2021neural}. This work was supported by the General Research Fund (Project No. 16210719, 15207220, and 16212120) from the Research Grants Council of Hong Kong.
     
        The authors are with the Department of Electronic and Computer Engineering, Hong Kong University of Science and Technology, Hong Kong (E-mail: yshenaw@connect.ust.hk, \{eejzhang, eeshsong eekhaled\}@ust.hk). Y. Shen is also with Microsoft Research Asia and K. B. Letaief is also with Peng Cheng Lab in Shenzhen.  (The corresponding author is J. Zhang.)}}
        
        \maketitle

\begin{abstract}
Deep learning-based approaches have been developed to solve challenging problems in wireless communications, leading to promising results. Early attempts adopted neural network architectures inherited from applications such as computer vision. They often yield poor performance in large scale networks (i.e., poor scalability) and unseen network settings (i.e., poor generalization). To resolve these issues, graph neural networks (GNNs) have been recently adopted, as they can effectively exploit the domain knowledge, i.e., the graph topology in wireless communications problems. GNN-based methods can achieve near-optimal performance in large-scale networks and generalize well under different system settings, but the theoretical underpinnings and design guidelines remain elusive, which may hinder their practical implementations. This paper endeavors to fill both the theoretical and practical gaps. For theoretical guarantees, we prove that GNNs achieve near-optimal performance in wireless networks with much fewer training samples than traditional neural architectures. Specifically, to solve an optimization problem on an $n$-node graph (where the nodes may represent users, base stations, or antennas), GNNs' generalization error and required number of training samples are $\mathcal{O}(n)$ and $\mathcal{O}(n^2)$ times lower than the unstructured multi-layer perceptrons. For design guidelines, we propose a unified framework that is applicable to general design problems in wireless networks, which includes graph modeling, neural architecture design, and theory-guided performance enhancement. Extensive simulations, which cover a variety of important problems and network settings, verify our theory and the effectiveness of the proposed design framework.
\begin{IEEEkeywords}
Graph neural networks, wireless communication, message passing, provably approximately correct learning, deep learning.
\end{IEEEkeywords} 
\end{abstract}

\section{Introduction}
\subsection{Motivation}
Deep learning has recently emerged as a revolutionary technology to solve various problems in wireless communication systems, e.g., radio resource management \cite{sun2018learning,shen2020graph}, channel estimation \cite{he2020model}, and joint source-channel coding \cite{shao2021learning}. There are two different approaches to developing deep learning-based methods. The first one is the data-driven approach \cite{sun2018learning,liang2018towards,shen2020graph}, which replaces conventional building blocks with a neural network and directly learns the optimal input-output mapping of the problem. The second is the model-driven approach \cite{he2019model,he2020model}, which replaces some policies in a classic algorithm with a neural network. For both paradigms, one essential design component is the underlying neural architecture, which governs the training and generalization performance. 

Early attempts took a plug-and-play approach and adopted neural architectures inherited from applications such as computer vision, e.g., fully connected multi-layer perceptrons (MLPs) or convolutional neural networks (CNNs) \cite{sun2018learning,shen2018lora,liang2018towards}. Although these classic architectures achieve near-optimal performance and fast execution on small-scale wireless networks, the performance is severely degraded when the number of users becomes large. For example, for the data-driven deep learning-based beamforming with CNNs, the performance is near-optimal for a two-user network, while an $18\%$ gap to the classic algorithm exists for a $10$-user network \cite{ma2021neural}. Moreover, traditional neural architectures generalize poorly when the key parameters in the test setting are different from that in the training dataset. For example, for the access point selection problem, the performance degradation of MLP-augmented branch-and-bound can be more than $50\%$ when the user number or signal-to-noise ratio (SNR) in the test dataset is slightly different from that in the training dataset \cite{shen2018lora}. 5G and beyond networks typically have densely deployed access points, hundreds of clients, and dynamically changing client numbers and SNR\cite{letaief2019roadmap,letaief2022Edge}, making it highly ineffective to apply MLP-based or CNN-based methods.

To improve scalability and generalization, a promising direction is to design neural network architectures that are specialized for wireless networks. One recent attempt is to employ graph neural networks (GNNs) that can exploit the graph topology of wireless networks \cite{lee2019graph,eisen2019optimal,shen2020graph,jiang2020learning,kosasih2022graph,chowdhury2021unfolding,lee2021learning,he2022graph,he2021Overview,zhou2022Multi}. GNN-based methods have achieved promising results in applications including resource management \cite{eisen2019optimal,shen2020graph}, end-to-end communication  \cite{jiang2020learning,zhou2022Multi}, and MIMO detection \cite{kosasih2022graph,he2022graph}. One may refer to a recent survey for more applications \cite{he2021Overview}. For example, for the beamforming problem, a GNN trained on a network with $50$ users is able to achieve near-optimal performance in a larger network with $1000$ users \cite{shen2020graph}. Furthermore, thanks to the parallel execution, GNNs are computationally efficient, and by far the only approach that is able to find the near-optimal beamformer for thousands of users in milliseconds \cite{shen2020graph}. However, despite the empirical successes, the theoretical underpinnings and design guidelines remain elusive, which hinders the practical implementations of GNNs in wireless networks.

The current theoretical understanding of the benefits of GNN-based methods is limited to the permutation invariance property \cite{shen2020graph,eisen2019optimal,guo2021learning,lee2021learning}. This property was proved to be universal in radio resource management problems in \cite{shen2020graph}. Given that GNNs well respect this property while MLPs do not \cite{shen2020graph,eisen2019optimal,guo2021learning}, we may expect that GNNs will outperform MLPs and this has been verified empirically in \cite{shen2020graph,eisen2019optimal,guo2021learning,lee2021learning}. Although these works provide some clues on why GNN architectures are better than MLPs for wireless networks, it is not practically useful. First, it fails to characterize the conditions for GNNs to outperform classic neural architectures. Second, the qualitative results of \cite{shen2020graph,eisen2019optimal,guo2021learning,lee2021learning} cannot be used to guide a better design of GNN-based algorithms, i.e., to find a better GNN architecture. To address these issues, a quantitative performance measure is needed. In this paper, we adopt the provably approximately correct learning (PAC-learning) framework \cite{valiant1984theory}, which is a standard way to characterize the generalization errors with a finite amount of training samples. With this framework, we are able to theoretically compare the generalization performance of GNN-based and MLP-based methods. Our analysis also leads to a principled and general approach to designing neural architectures for GNNs, while existing studies mainly focus on individual application cases \cite{shen2019graph,lee2019graph,guo2021learning,jiang2020learning,chowdhury2021unfolding,zhou2022Multi}.

\subsection{Contributions}
In this paper, we develop a unified framework for applying GNNs to wireless communications. Specifically, our main contributions are summarized as follows:
\begin{enumerate}
    \item We present a general framework for modeling typical problems of wireless communication as graph optimization problems. For illustrations, we apply this framework to power control in device-to-device (D2D) networks, resource allocation in cell-free networks, and hybrid precoding for millimeter-wave (mmWave) networks.
	
    \item To bridge the learning-based algorithms and optimization-based algorithms in wireless networks, we identify an important class of algorithms for solving graph optimization problems, named \emph{distributed message passing} (DMP) algorithms \cite{angluin1980local}. DMP algorithms unify many classic optimization-based algorithms (e.g., WMMSE \cite{Shi2011An}, Riemannian gradient \cite{yu2016alternating}, and belief propagation \cite{yedidia2000bethe}), as well as GNN-based algorithms (e.g., \cite{shen2019graph}) for problems in wireless networks. It is proved that GNNs are the most powerful DMP algorithms, i.e., they can represent any DMP algorithm with a proper choice of learnable weights, which justifies the adoption of GNNs in wireless networks from an algorithmic perspective.

    \item To theoretically verify the advantages of GNNs over MLPs in solving graph optimizations in wireless communications, we provide the first generalization analysis for these two neural architectures. Specifically, based on the PAC-learning framework \cite{valiant1984theory,xu2019what}, we prove that the GNNs' generalization error and required number of training samples are $\mathcal{O}(n)$ and $\mathcal{O}(n^2)$ times lower than those of MLPs, where $n$ is the number of nodes in the graph. This indicates that GNNs will achieve higher performance gains in dense wireless networks.
    
    \item Based on the generalization analysis, we propose a theory-guided neural architecture design methodology for performance enhancement. Simulations will verify our theory and the effectiveness of the proposed neural architecture design principles. In particular, it will be shown that the proposed GNNs improve the performance of standard GNNs, significantly outperform MLP-based methods, and achieve hundreds of times of speedups compared with classic optimization-based methods\footnote{The codes to reproduce the simulation results are available on \url{https://github.com/yshenaw/GNN4Com}.}.
\end{enumerate}

\subsection{Preview and Organization}
In this subsection, we give a preview of the proposed framework as illustrated in Fig. \ref{fig:e2e}. We consider a general optimization problem in wireless communications \cite{sun2018learning}, with the following formulation:
\begin{align}\label{opt:gen}
    \min_{\bm{x} } \quad f(\bm{x};\bm{z}), \quad \text{s.t. } \bm{x} \in \mathcal{X},
\end{align}
where $f:\mathbb{C}^n \rightarrow \mathbb{R}$ is a continuous (possibly nonconvex) objective function, $\bm{x}$ is the optimization variable (e.g., the beamforming matrix), $\mathcal{X} \subset \mathbb{C}^n$ is the feasible set, and $\bm{z}$ contains the problem parameters (e.g., channel state information). 

Our framework consists of two aspects: \emph{algorithmic design} and \emph{theoretical analysis}. For the algorithm development, we first formulate \eqref{opt:gen} as a graph optimization problem. Specifically, we model the wireless network as a graph and the problem parameters $\bm{z}$ as the graph's features, i.e., node features $\bm{Z}$ and edge features ${\bf A}$ (Section \ref{sec:model}). We will then design a GNN architecture based on the graph type (Section \ref{subsec:mpgnn}). We further apply a theory-guided approach to modify the neural architecture for performance enhancement (Section \ref{sec:arch}). For simplicity, we adopt the data-driven paradigm\footnote{For the model-driven deep learning paradigm, one can replace the MLPs or CNNs with GNNs by following the same procedure of graph modeling and neural architecture design.}, where the GNN outputs the optimal solution $\bm{x}^*$ and is trained with an unsupervised loss function \cite{liang2018towards}. For the theoretical analysis, we first analyze the capability of GNNs in learning an optimal algorithm for Problem \eqref{opt:gen} in Theorem \ref{thm:GNNDMP} (Section \ref{sec:mpgnn}). We shall then demonstrate the benefits of GNNs over MLPs in terms of the sample complexity and generalization error in Theorem \ref{thm:gen_gnn_mlp} (Section \ref{sec:aa}). Simulation results will be presented in Section \ref{sec:exp} and finally we conclude this paper in Section \ref{sec:con}.

\begin{figure*}[htb]
	\centering
	\includegraphics[width=0.8\textwidth]{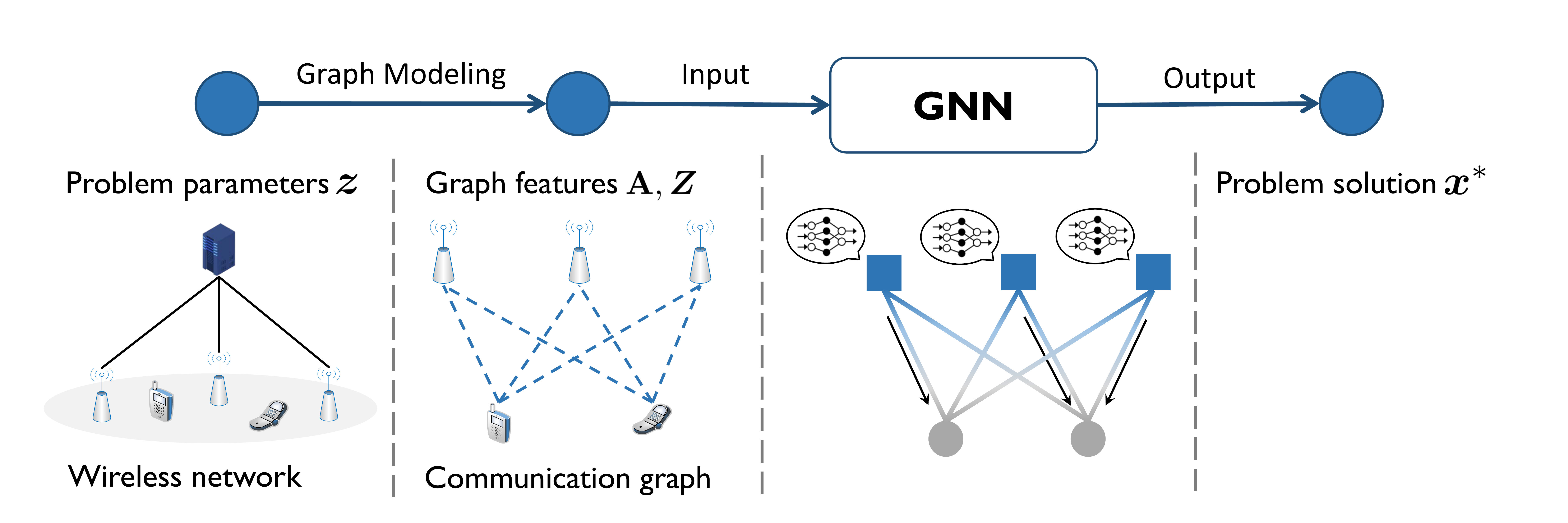}
	\caption{An illustration of the proposed algorithmic framework.}
	\label{fig:e2e}
\end{figure*}

\subsection{Notations}
Throughout this paper, $\bm{a}$, $\bm{A}$, ${\bf A}$ stand for a column vector, a matrix, and a three-dimensional tensor, respectively. $\bm{a}_{(i)}$, $\bm{A}_{(i,j)}$, and ${\bf A}_{(i,j,k)}$ are the entry of the $i$-th element of the vector $\bm{a}$, the $i$-th row and the $j$-th column of the matrix $\bm{A}$, the entry indexed by $i,j,k$ in tensor ${\bf A}$,  respectively. $\bm{A}_{(i,:)}$ denotes the $i$-th row of matrix $\bm{A}$. The conjugate, transpose, and conjugate transpose of $\bm{A}$ are presented by $\bm{A}^*$, $\bm{A}^T$, and $\bm{A}^H$; $\Re(\cdot)$ denotes the real part of a complex variable.

\begin{table*}[t]	
\selectfont  
\centering
\newcommand{\tabincell}[2]{\begin{tabular}{@{}#1@{}}#2\end{tabular}}
\caption{Typical Examples of Graph Modeling.} 
\resizebox{0.9\textwidth}{!}{
\begin{tabular}{|c|c|c|c|c|}
\hline
\textbf{Problem Type} & \textbf{System Model} & \textbf{Graph Type} & \textbf{Node/Node Features} & \textbf{Edge/Edge Features} \\ \hline
Resource management & D2D networks & \tabincell{c}{Weighted graph} &  \tabincell{c}{Transceiver pair/ \\ Direct link channels} & \tabincell{c}{Interference links/ \\ Interference link channels}   \\ \hline
Resource management & Cell-free networks & \tabincell{c}{Bipartite graph} & \tabincell{c}{UEs and APs/ \\ No feature}
& \tabincell{c}{Direct links/\\ Large-scale coefficients  } \\ \hline
Signal processing & Hybrid precoding & \tabincell{c}{Bipartite graph} & \tabincell{c}{Symbols and antennas/ \\No Feature}
& \tabincell{c}{Connectivity/ \\ Digital beamforming matrix} \\ \hline
\end{tabular}
}
\label{problem_example}
\end{table*}

\section{Graph Modeling of Wireless Communication Problems}\label{sec:model}
In this section, we first present a general graph modeling framework for wireless communication problems, followed by illustrations of three application examples that require different graph modeling techniques.
\subsection{Wireless Communication Problems as Graph Optimization}
Traditionally, a graph is an ordered pair $G = (V,E)$, where $V$ denotes the set of nodes and $E \subset \{\{x,y\}| x,y \in V\}$ denotes edges associated with pairs of nodes. In wireless networks, the agents' and links' properties are as important as the network topology. As a result, we augment the tuple $(V,E)$ to include the node features and edge features. Specifically, we define a \emph{communication graph} as $G = (V,E,s,t)$, where $V$ is the set of nodes, $E$ is the set of edges, $s: V \rightarrow \mathbb{C}^{d_1}$ maps a node to its feature, and $t: E \rightarrow \mathbb{C}^{d_2}$ maps an edge to its feature. In the context of wireless communications, the nodes represent users, access points (APs), antennas,  or variables in a factor graph, and the node feature is the property of this node (e.g., the importance of the node). The edges represent the communication links, interference links, or some connectivity patterns, while the edge feature is the property of this edge (e.g., channel state information). We further define the node feature matrix as $\bm{Z} \in \mathbb{C}^{|V| \times d_1}$ where $\bm{Z}_{(i,:) } = s(v_i)$, and the adjacency feature tensor ${\bf A} \in \mathbb{C}^{|V| \times |V| \times d_2}$ as
\begin{align*} 
{\bf A}_{ (i,j,:) } = \left\{
\begin{aligned}
&\bm{0}, &&\text{if } \{i,j\} \notin E \\
&t(\{i,j\}) &&\text{otherwise}.
\end{aligned}
\right.
\end{align*}

We now define graph optimization problems on communication graphs. Denote the optimization variable on the $i$-th node as $\bm{x}_i$ and collect all the variables as $\bm{X} = [\bm{x}_1,\cdots,\bm{x}_{|V|}]^T \in \mathbb{C}^{|V| \times n}$. A graph optimization problem is given by
\begin{equation} \label{eq:cg_opt}
\begin{aligned}
&\mathscr{P}:\underset{\bm{X}}{\text{minimize}}
& & f(\bm{X},\bm{Z},{\bf A})
& \text{ subject to }
& & Q(\bm{X},\bm{Z},{\bf A}) \leq 0
\end{aligned},
\end{equation}
where $f(\cdot)$ and $Q(\cdot)$ denote the objective function and constraint respectively. This formulation is general and includes typical examples in wireless communications, as illustrated below.

\textbf{Resource Management and Signal Processing}: First, as the topology of a wireless network is essentially a graph, radio resource management problems are naturally graph optimization problems. In this paper, we consider two example networks, namely, D2D networks and cell-free networks, where we model users and APs as nodes. Second, the graph structure is ubiquitous in signal processing applications (e.g., hybrid precoding, localization, and traffic prediction), which create graph optimization problems. We consider hybrid precoding as a signal processing example where we model the antennas as nodes. 

\textbf{Statistical Inference}: The inverse problems such as channel estimation, data detection, and wireless sensing can be formulated as optimization problems on factor graphs \cite{yedidia2000bethe}, which can be enhanced by GNNs \cite{kosasih2022graph,he2022graph}. Due to space limitation, we will not cover this topic. 

An overview of typical graph modeling examples is shown in Table \ref{problem_example}, and the following three subsections will provide more details on these three running examples.

\subsection{Power Control in Device-to-Device Networks}
\begin{figure*}
    \centering
    \subfigure[D2D networks.]
    {
        \includegraphics[width=0.62\columnwidth]{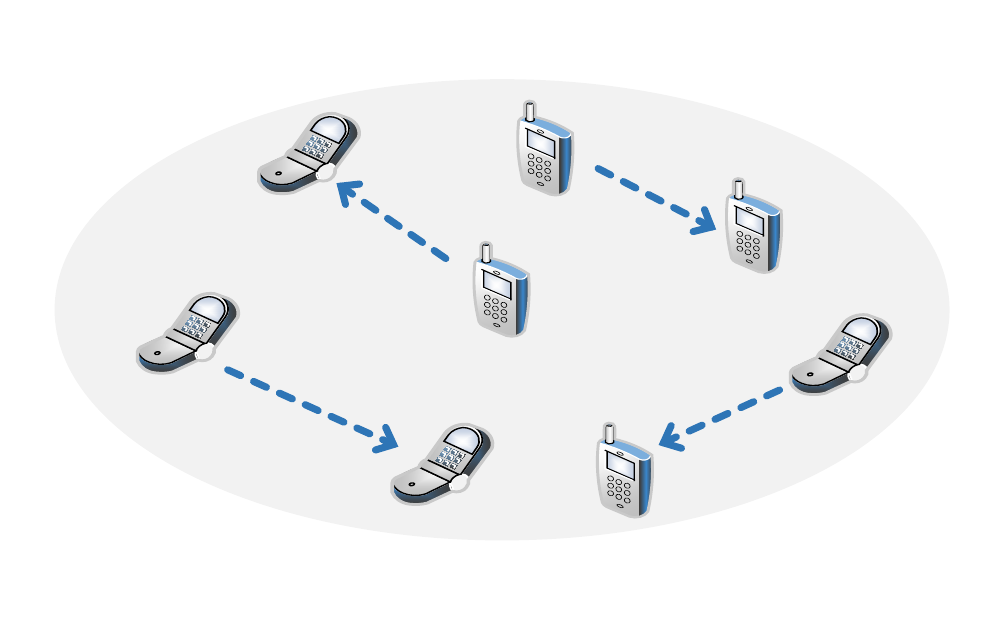}
        \label{fig:d2d_sm}
    }\hfil
    \subfigure[Cell-free networks.]
    {
        \includegraphics[width=0.62\columnwidth]{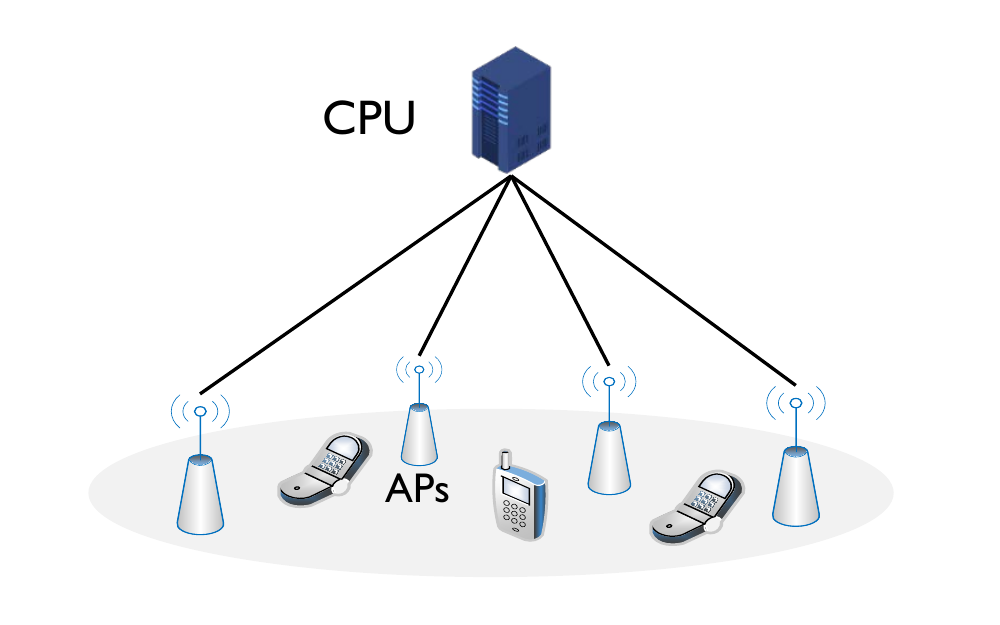}
        \label{fig:cf_sm}
    }\hfil
    \subfigure[Hybrid precoding.]
    {
        \includegraphics[width=0.62\columnwidth]{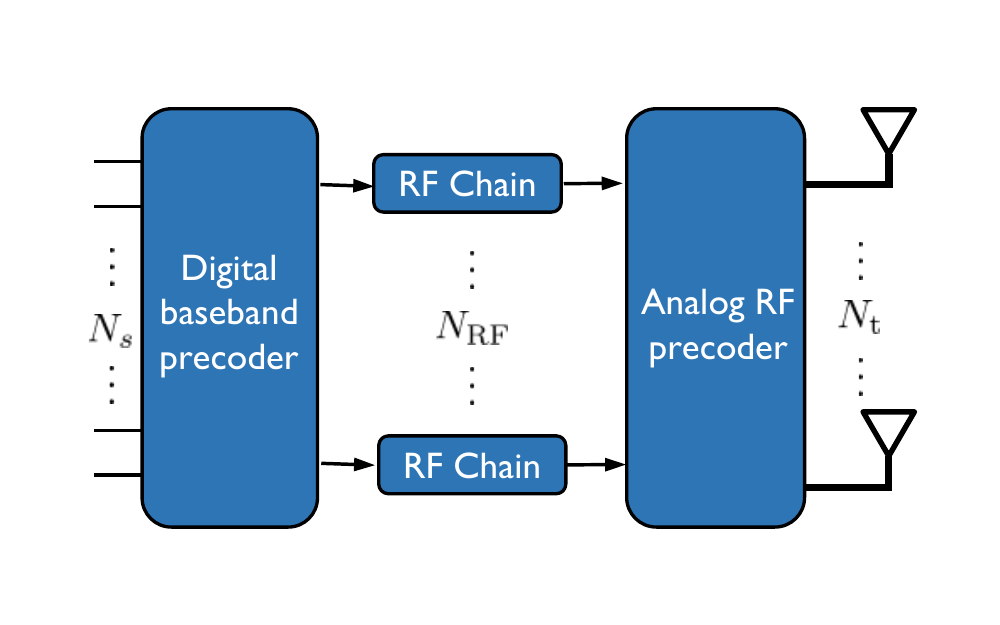}
        \label{fig:hb_sm}
    }
    \medskip
    \subfigure[Graph modeling of D2D networks.]
    {
        \includegraphics[width=0.62\columnwidth]{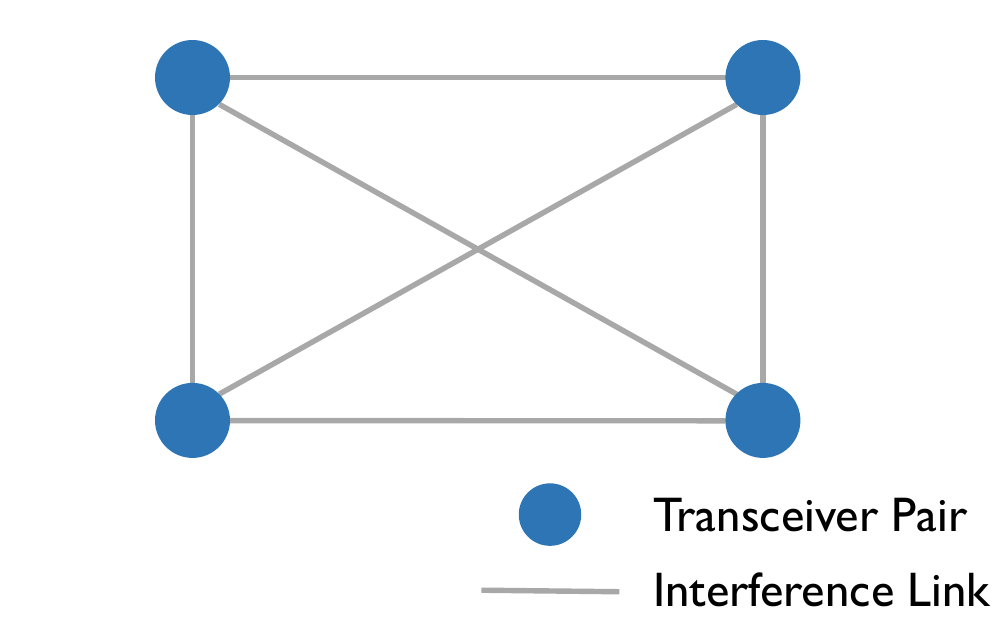}
        \label{fig:d2d_gm}
    }\hfil
    \subfigure[Graph modeling of cell-free networks.]
    {
        \includegraphics[width=0.62\columnwidth]{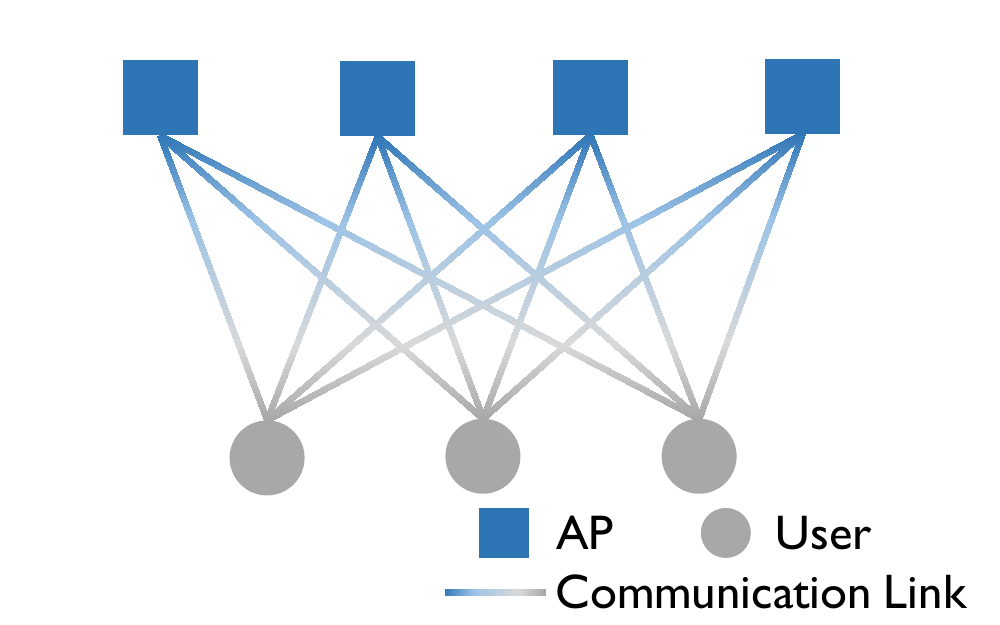}
        \label{fig:cf_gm}
    }\hfil
    \subfigure[Graph modeling of hybrid precoding.]
    {
        \includegraphics[width=0.62\columnwidth]{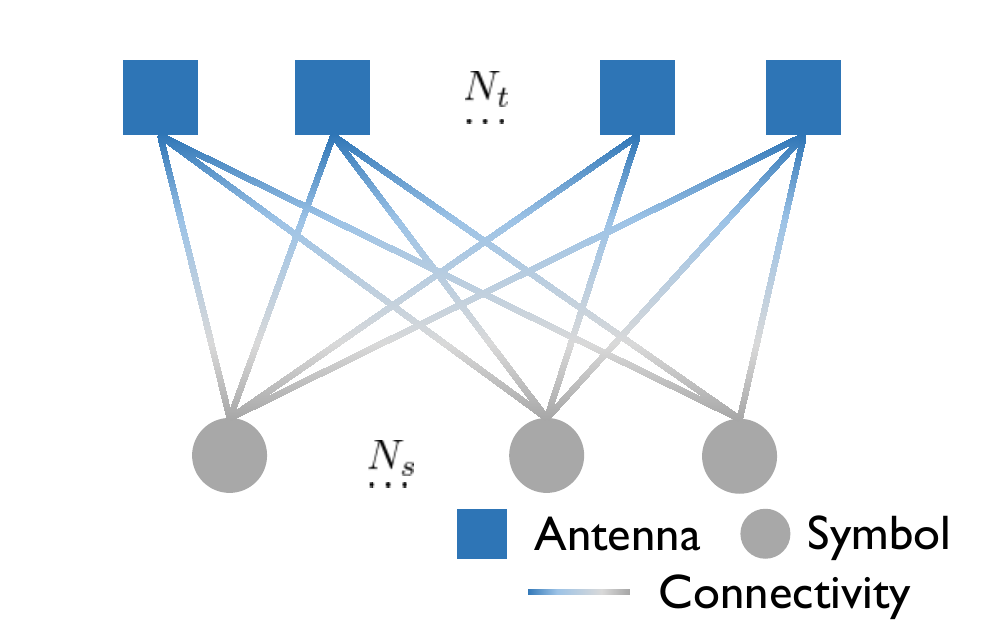}
        \label{fig:hb_gm}
    }
    \caption{Graph modeling examples.}
\end{figure*}
We first consider a D2D network with $K$ transceiver pairs where both transmitters and receivers are equipped with a single antenna \cite{sun2018learning}. Let $p_{k}$ denote the transmit power of the $k$-th transmitter, $h_{k,k} \in \mathbb{C}$ denote the direct-link channel between the $k$-th transmitter and receiver, $h_{j,k} \in \mathbb{C}$ denote the cross-link channel between transmitter $j$ and receiver $k$, $s_k \in \mathbb{C}$ denote the data symbol for the $k$-th receiver, and $n_k \sim \mathcal{CN}(0,\sigma_k^2)$ denote the additive Gaussian noise. The signal-to-interference-plus-noise ratio (SINR) for the $k$-th receiver is given by $\text{SINR}_k = \frac{|h_{k,k}|^2p_k}{\sum_{j\neq k}|h_{j,k}|^2p_j+\sigma_k^2}$. The power control problem for weighted sum rate maximization is formulated as follows, with $w_k$ representing the weight for the $k$-th pair:
\begin{equation}\label{prob:sum_rate}
\begin{aligned}
&\underset{p_1, \cdots, p_K}{\text{maximize}}
& & \sum_{k=1}^{K} w_k \log_2 \left(1+ \text{SINR}_k \right) \\
& \text{subject to}
& & 0 \leq p_k \leq 1, \forall k.
\end{aligned}
\end{equation}

To model it as a graph optimization problem, we treat the $k$-th transceiver pair as the $k$-th node in the graph. Illustrations of the system model and graph modeling are shown in Fig. \ref{fig:d2d_sm} and Fig. \ref{fig:d2d_gm}, respectively. The optimization variable $\bm{X} \in \mathbb{R}^{|V| \times 1}$, node feature matrix $\bm{Z} \in \mathbb{R}^{K \times 3}$, and adjacency feature tensor ${\bf A} \in \mathbb{R}^{K \times K \times 1}$ are given by $\bm{X}_{(k,1)} = p_k, \quad \bm{Z}_{ (k,:) } = [ |h_{k,k}|^2, w_k, \sigma_k^2]^T, \quad {\bf A}_{ (j,k,1) } = |h_{j,k}|^2$.
With the node features and edge features, the SINR is expressed as $\text{SINR}_k = \frac{\bm{Z}_{ (k,1) } \bm{X}_{(k,1)}}{ \sum_{j \neq k} {\bf A}_{ (j,k,1) }\bm{X}_{(j,1)} + \bm{Z}_{(k,3)} }$. The optimization problem \eqref{prob:sum_rate} is then converted into the following graph optimization problem: 
\begin{equation}\label{prob:sum_rate_graph}
\begin{aligned}
&\underset{\bm{X}}{\text{maximize}}
& & \sum_{k=1}^K \bm{Z}_{ (k, 2) } \log_2 \left(1+ \text{SINR}_{k} \right) \\
& \text{subject to}
& & 0 \leq \bm{X}_{(k,1)} \leq 1, \forall k.
\end{aligned}
\end{equation}

\subsection{Power Control in Cell-free Networks}
\begin{figure*}
\begin{align}\label{eq:sinr_cf}
    \text{SINR}_k = \frac{p_k \left( \sum_{m=1}^M v_{m,k}\right)^2}{\sum_{k'\neq k}^K p_{k'} \left(\sum_{m=1}^M v_{m,k}\frac{u_{m,k'}}{u_{m,k}} \right)^2 |\bm{\phi}_k^H\bm{\phi}_{k'}|^2 + \sum_{k'=1}^K p_{k'} \sum_{m=1}^M v_{m,k}u_{m,k'} + \frac{1}{\rho} \sum_{m=1}^M v_{m,k} }
\end{align}
\end{figure*}

We consider a cell-free massive MIMO system with $M$ single-antenna APs and $K$ single-antenna users, where the APs are connected to a central processing unit (CPU) \cite{ngo2017cell}. Specifically, we study the power control problem in the uplink pilot transmission phase \cite{rajapaksha2021deep}. During this stage, all the users transmit their pilot sequences of $\tau$ symbols. Let $\rho$, $\sqrt{\tau}\bm{\phi}_k \in \mathbb{C}^{\tau \times 1}$, $p_k$, $u_{m,k}$ denote the SNR, the pilot sequence and transmit power of the $k$-th user, and the large-scale fading coefficient between the $k$-th user and the $m$-th AP, respectively. The SINR of the $k$-th user is given in \eqref{eq:sinr_cf}, where $v_{m,k} = \frac{\sqrt{\tau \rho} u_{m,k}}{\tau \rho \sum_{k'=1}^K u_{m,k} |\bm{\phi}_k^H\bm{\phi}_{k'}|^2 + 1}$. We consider the following formulation with user fairness, i.e., to maximize the minimum user rate:
\begin{equation}\label{prob:cf}
\begin{aligned}
&\underset{p_k}{\text{max}} \quad \underset{k = 1, \cdots, K}{\text{min}}
& & \log_2(1+\text{SINR}_k) \\
& \text{subject to}
& & 0 \leq p_k \leq 1.
\end{aligned}
\end{equation}

For ease of comparison, we follow the setting of \cite{rajapaksha2021deep}, where only the large-scale fading coefficients $u_{m,k}$ are used as the input of neural networks. We model the system as a fully connected bipartite graph, where the APs and users are viewed as nodes. Illustrations of the system model and graph modeling are shown in Fig. \ref{fig:cf_sm} and Fig. \ref{fig:cf_gm}, respectively. The optimization variable $\bm{X} \in \mathbb{R}^{K \times 1}$ is given by 
\begin{align*}
    \bm{X}_{(k,1)} = p_k, \quad 1 \leq k \leq K.
\end{align*}
Define the large-scale coefficient matrix as $\bm{U} \in \mathbb{R}^{M \times K}$, where $\bm{U}_{(m,k)} = u_{m,k}$. The adjacency feature tensor of the bipartite graph ${\bf A} \in \mathbb{R}^{(M+K) \times (M+K) \times 1}$ is given by $ {\bf A}_{(:,:,1)} = \begin{bmatrix} \bm{0}_{M \times M} & \bm{U} \\
    \bm{U}^T & \bm{0}_{K \times K}
    \end{bmatrix}$, and there is no node feature. The graph optimization problem is obtained by replacing $p_k$ in \eqref{eq:sinr_cf} with $\bm{X}_{(k,1)}$ and replacing $u_{m,k}$ in \eqref{eq:sinr_cf} with ${\bf A}_{(m,M+k)}$. 

\subsection{Hybrid Precoding in mmWave Communications}
This example considers a single-user mmWave MIMO system,  where $N_s$ data streams are sent and collected by $N_t$ transmit antennas. Hybrid precoding is adopted to reduce hardware cost and energy consumption \cite{yu2016alternating}. The number
of RF chains at the transmitter is denoted as $N_{\rm{RF} }$, which is subject to the constraint $N_s \leq N_{\rm{RF} } \leq N_t$. The hybrid precoder consists of an $N_{\rm{RF} } \times N_s$ digital baseband precoder $\bm{F}_{\rm{BB} }$ and an $N_t \times N_{\rm{RF} }$ analog RF precoder $\bm{F}_{\rm{RF} }$. The transmit power constraint is given by $\|\bm{F}_{\rm{RF} }\bm{F}_{\rm{BB} }\|_F^2 = N_s$. We consider a generic formulation for the hybrid precoding problem \cite{yu2016alternating}:
\begin{equation}\label{prob:hybrid}
\begin{aligned}
&\underset{\bm{F}_{\rm{RF} }, \bm{F}_{\rm{BB} }}{\text{minimize}}
& & \|\bm{F}_{\rm{opt} } - \bm{F}_{\rm{RF} }\bm{F}_{\rm{BB} }\|_F \\
& \text{subject to}
& & \bm{F}_{\rm{RF} } \in \mathcal{X}, \quad \|\bm{F}_{\rm{RF} }\bm{F}_{\rm{BB} }\|_F^2 = N_s,
\end{aligned}
\end{equation}
where $\bm{F}_{\rm{opt} }$ is the optimal fully digital precoder to be approximated, and $\mathcal{X} \in \{\mathcal{X}_f, \mathcal{X}_p \}$ is the feasible set of the analog precoder induced by the unit modules constraints, which is distinct for different hybrid precoding structures. We consider the partially connected structure, where the output signal of each RF chain is only connected with $N_t/N_{RF}$ antennas \cite{yu2016alternating}. The analog precoder $\bm{F}_{\rm{RF} }$ is a block diagonal matrix and each block is an $N_t/N_{RF}$ dimensional vector with unit modulus elements:
\begin{equation}\label{eq:partial}
\bm{F}_{\rm{RF} }=\left[ {\begin{array}{*{20}{c}}
{\bm{p}_1}&{\bm{0}}&{\cdots}&{\bm{0}}\\
{\bm{0}}&{\bm{p}_2}&{}&{\bm{0}}\\
{\vdots}&{}&{\ddots}&{\vdots}\\
{\bm{0}}&{\bm{0}}&{\cdots}&{\bm{p}_{N_{\rm{RF} }}}
\end{array}} \right],
\end{equation}
where $\mathbf{p}_i=\left[\exp\left(j\theta_{(i-1)\frac{N_t}{N_\mathrm{RF}}+1}\right),\cdots,\exp\left(j\theta_{i\frac{N_t}{N_\mathrm{RF}}}\right)\right]^T$ and $\theta_i$ is the phase of the $i$-th phase shifter. 

One intuitive way of graph modeling for this example is to model the symbols, RF chains, phase shifters, and antennas as nodes in the graph. However, it is difficult to find a way to represent $\bm{F}_{\rm{opt}}$ as features on this graph. Instead, we model this system as a fully connected bipartite graph where the transmit symbols and the transmit antennas are viewed as nodes. Illustrations of the system model and graph modeling are shown in Fig. \ref{fig:hb_sm} and Fig. \ref{fig:hb_gm}, respectively. We define the optimization variables $\bm{X}_{\rm{BB}} \in \mathbb{C}^{N_s \times N_{\rm{RF}}}$ and $\bm{X}_{\rm{RF}} \in \mathbb{C}^{N_{t} \times 1}$ as
\begin{align*}
    &(\bm{X}_{\rm{BB}})_{(i,:)} = (\bm{F}_{\rm{BB}})_{(i,:)}^H, \quad (\bm{X}_{\rm{RF}})_{(i,1)} = (\bm{F}_{\rm{RF}})_{(i,\lceil  iN_t/N_{\rm{RF}}\rceil)}, \\ &1 \leq i \leq N_{\rm{RF}}, 1 \leq j \leq N_{t}.
\end{align*}
The adjacency feature tensor ${\bf A} \in \mathbb{C}^{(N_t + N_s) \times (N_t + N_s) \times 1}$ is given by
\begin{align*}
    {\bf A}_{(:,:,1)} = \begin{bmatrix} \bm{0}_{N_t \times N_t} & \bm{F}_{\rm{opt}} \\
    \bm{F}_{\rm{opt}}^H & \bm{0}_{N_s \times N_s}
    \end{bmatrix}
\end{align*}
and there is no node feature. The graph optimization problem is written as
\begin{equation}\label{prob:hybrid_partial}
\small
\begin{aligned}
&\underset{\bm{X}_{RF}, \bm{X}_{BB}}{\text{minimize}}
& & \sum_{i=1}^{N_t} \sum_{j=1}^{N_s} |{\bf A}_{(i,N_t+j,1)}- (\bm{X}_{\rm{RF} })_{(i,1)}(\bm{X}_{\rm{BB}})_{(j,\lceil iN_t/N_{\rm{RF}}\rceil) }^* |^2 \\
& \text{subject to}
& & |(\bm{X}_{\rm{RF} })_{(i,1)}| = 1, \quad 1 \leq i \leq N_t, \\
& 
& & \|\bm{X}_{\rm{BB} }\|_F^2 = \frac{N_{\rm{RF}}N_s}{N_t}.
\end{aligned}
\end{equation}
\section{From Wireless Networks to Graph Neural Networks}\label{sec:mpgnn}
In this section, we first introduce GNNs, and illustrate how to apply them to solve the graph optimization problem formulated in \eqref{eq:cg_opt}. Then, we develop a unified framework that bridges optimization-based algorithms and GNN-based algorithms, which justifies the application of GNNs to solve problems in wireless communications from an algorithmic perspective.

\subsection{Graph Neural Networks}\label{subsec:mpgnn}
GNNs extend the spatial convolution in CNNs to graphs. In one CNN layer, each pixel updates its own hidden state based on the convolution with its neighbor pixels. Similarly, in one GNN layer, each node updates its own hidden state based on the aggregated information from its neighbor nodes. Specifically, denote $\bm{d}^{(t)}_k$ as the hidden state of the $k$-th node at the $t$-th layer, each GNN layer consists of two stages \cite{wang2019deep,shen2020graph,guo2021learning}:
\begin{enumerate}
    \item \textbf{Aggregation}: The $k$-th node uses a neural network to aggregate its neighbors' outputs of the last layer, followed by a pooling function. The update can be expressed as:
    \begin{align}\label{gnn:agg}
        \bm{a}_k^{(t)} = \text{PL}_{j \in \mathcal{N}(k)} \left(q_1^{(t)}(\bm{d}_k^{(t-1)}, \bm{d}_j^{(t-1)}, {\bf A}_{(j,k,:)}|\mathcal{W}_1^{(t)} )\right),
    \end{align}
    where $\mathcal{N}(k)$ is the set of neighbors of node $k$, $\text{PL}_{j \in \mathcal{N}(k)}(\cdot)$ is a pooling function (e.g., sum or max pooling), $q_1^{(t)}(\cdot)$ is the aggregation neural network at the $t$-th layer, and $\mathcal{W}_1^{(t)}$ denotes the learnable weights of this neural network. 
    \item \textbf{Combination}: After obtaining the aggregated information, another neural network is applied to process information and obtain the output at each node. Specifically, the aggregated information is combined with the node's own information as follows
    \begin{align}\label{gnn:comb}
        \bm{d}_k^{(t)} = q_2^{(t)}(\bm{d}_k^{(t-1)}, \bm{Z}_{(k,:)},\bm{a}_k^{(t)}|\mathcal{W}_2^{(t)}),
    \end{align}
    where $q_2^{(t)}(\cdot)$ represents the combination neural network and is with learnable model parameters $\mathcal{W}_2^{(t)}$.
\end{enumerate}

To apply a GNN on a particular graph, one should design $q_1$, $q_2$, and the pooling function. Recently the machine learning community has developed various kinds of GNN architectures for different types of graphs, and a straightforward way is to select the one that works well on the corresponding graph. For example, D2D networks are weighted graphs, and thus one may select a GNN that is designed for weighted graphs from the deep graph library (DGL) \cite{wang2019deep}, as illustrated below.

\begin{exmp}\label{exmp:ecgnn} (ECGNN for D2D networks) The edge convolutional graph neural network (ECGNN) is an architecture designed for weighted graphs and has been applied to D2D power control \cite{shen2019graph}. It first initializes $\bm{d}^{(0)}_k = \bm{Z}_{(k,:)} = [w_k, h_{k,k}]$. For a $T$-layer ECGNN, the update of the $k$-th node in the $t$-th layer is specified as
\begin{equation}\label{gnn:ecgnn}
\begin{aligned}
    &\bm{a}_k^{(t)} = \sum_{j \neq k} \sigma(\bm{W}_1^{(t)} \bm{d}^{(t-1)}_k + \bm{w}_2^{(t)} h_{j,k} + \bm{W}_3^{(t)} \bm{d}^{(t-1)}_j), \\
    &\bm{d}_k^{(t)} = \frac{\bm{a}_k^{(t)}}{|\mathcal{N}(k)|} , \quad p_k = \text{MLP}(\bm{d}_k^{(T)}), 
\end{aligned}
\end{equation}
where $\bm{W}_1,\bm{W}_2, \bm{W}_3$ are the weight matrices, $\text{MLP}(\cdot)$ is a learnable MLP mapping the hidden state to the power values, and $\sigma(\cdot)$ is an activation function. In this GNN, we have $\text{PL} = \text{SUM},  \mathcal{W}_1^{(t)} = \{\bm{W}_1^{(t)}, \bm{w}_2^{(t)}, \bm{W}_3^{(t)}\}, \mathcal{W}_2^{(t)} = \emptyset$.
\end{exmp}

Different from D2D networks, the cell-free network is modeled as a bipartite graph. As a result, we may choose a neural network designed for bipartite graphs from DGL \cite{wang2019deep}.

\begin{exmp}\label{exmp:hetgnn} (HetGNN for cell-free networks) The heterogeneous graph neural network (HetGNN) \cite{guo2021learning} is powerful in learning representations of heterogeneous bipartite graphs. In each layer of a HetGNN, there are two types of messages: the messages from APs to UEs, and the messages from UEs to APs. As a result, we should use different weight matrices to parameterize different messages. As there is no node feature, we initialize $\bm{d}^{(0)}_k \in \mathbb{R}^0$ as an empty vector. For a $T$-layer HetGNN, the update of the $k$-th node in the $t$-th layer is given as
\begin{equation}\label{gnn:hetgnn}
\begin{aligned}
    &\bm{a}^{(t)}_k = \left\{
\begin{aligned}
&\sum_{m=1}^M \bm{W}_1^{(t)}\bm{d}^{(t-1)}_m + \bm{w}_2^{(t)} u_{m,k}, \quad  M < k \leq M+K, \\
&\sum_{m=M+1}^{M+K} \bm{W}_3^{(t)}\bm{d}^{(t-1)}_m + \bm{w}_4^{(t)} u_{k,m}, \quad 0 \leq k \leq M,
\end{aligned}
\right. \\\
    &\bm{d}^{(t)}_k = \left\{
\begin{aligned}
&\sigma(\bm{W}_5^{(t)}\bm{d}^{(t-1)}_k + \bm{a}^{(t)}_k), \quad  M < k \leq M+K, \\
&\sigma(\bm{W}_6^{(t)}\bm{d}^{(t-1)}_k + \bm{a}^{(t)}_k), \quad 0 \leq k \leq M,
\end{aligned}
\right. \\
&p_k = \text{MLP}(\bm{d}^{(T)}_k), \quad  M < k \leq M + K,
\end{aligned}
\end{equation}
where $\bm{W}_1^{(t)}, \cdots, \bm{W}_6^{(t)}$ are learnable weights, $\sigma(\cdot)$ is an activation function, and $\text{MLP}(\cdot)$ is a learnable MLP mapping the hidden state to the power values. In this GNN, we have $\text{PL} = \text{SUM}, \mathcal{W}^{(t)} = \{\bm{W}_1^{(t)}, \bm{W}_3^{(t)}, \bm{w}_2^{(t)}, \bm{w}_4^{(t)}\},  \mathcal{W}_2^{(t)} = \{\bm{W}_5^{(t)}, \bm{W}_6^{(t)}\}$.
\end{exmp}

As the underlying topology of hybrid precoding is also a bipartite graph, HetGNN can be applied. Nevertheless, a good architecture design for this problem is more involved and will be treated in Section \ref{arch:hb}.

\subsection{Distributed Message Passing Algorithms}
In this subsection, we introduce \emph{distributed message passing} (DMP) algorithms \cite{angluin1980local}, which play an essential role in the analysis of GNNs.  They are also named after \emph{LOCAL algorithms} in distributed computing literature \cite{loukas2019graph}. DMP algorithms are a classic family of iterative optimization algorithms on graphs. In each iteration, each node sends messages to its neighbors, receives messages from its neighbors, and updates its state based on the received messages. The updates are shown in Algorithm \ref{alg:dmp}, where $h^{(t)}$, $g_1^{(t)}$, $g_2^{(t)}$ are the message encoding function, message aggregation function, and update function at the $t$-th iteration, respectively. The choice of these functions depends on the specific application. This class of algorithms is extremely powerful and is able to solve any graph optimization problem, as shown in the following theorem.

\begin{algorithm}\label{alg:dmp}
	\caption{Distributed Message Passing Algorithms \cite{angluin1980local}}  
	\label{alg:dmp}  
	\small
	\begin{algorithmic}[1]
		\State Initialize all internal states $\bm{x}_i^{(0)}, \forall i$.
		\For{communication round $t=1,\cdots,T$}
		\State $\forall i$, agent $i$ receives 
        \Statex $\left\{\bm{m}_{j\rightarrow i}^{(t)}|\bm{m}_{j\rightarrow i}^{(t)} = h^{(t)}\left(\bm{x}_j^{(t-1)}, \bm{x}^{(t-1)}_i,  {\bf A}_{(j,i,:)}\right), j\in \mathcal{N}(i)\right\}$ from the edges.
		\State $\forall i$, agent $i$ aggregates messages $\bm{y}^{(t)}_i = g_1^{(t)}\left(\left\{\bm{m}^{(t)}_{j\rightarrow i}: j \in \mathcal{N}(i)\right\}\right)$.
		\State $\forall i$, agent $i$ updates its internal state  $\bm{x}^{(t)}_i = g_2^{(t)}\left(\bm{Z}_{(i,:)}, \bm{x}^{(t-1)}_i, \bm{y}_i^{(t)}\right)$.
		\EndFor
		\State Output $[\bm{x}_i^{(T)}]_{i=1}^{|V|}$.
	\end{algorithmic}  
\end{algorithm}

\begin{thm}\label{fact:dmp}
    For any graph optimization problem as formulated in \eqref{eq:cg_opt}, there exists a DMP algorithm that can solve it.
\end{thm}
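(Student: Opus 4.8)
The plan is to exploit the power of full information flooding: in enough communication rounds a DMP algorithm lets every node reconstruct the entire instance $(G,\bm{Z},\bm{A})$ and then solve $\mathscr{P}$ ``in its head,'' outputting only its own block of a global optimizer. Concretely, I would run Algorithm~\ref{alg:dmp} for $T=|V|$ rounds and choose the internal state $\bm{x}_i^{(t)}$ to be an encoding of node $i$'s ``local view after $t$ rounds'': the identities and node features of all nodes within graph distance $t$ of $i$, together with the edge features of all edges incident to nodes within distance $t-1$ of $i$. Initialize $\bm{x}_i^{(0)}=(\mathrm{id}(i),\bm{Z}_{(i,:)})$.

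The core of the argument is a short induction on $t$. Take the message $\bm{m}_{j\to i}^{(t)}=h^{(t)}(\bm{x}_j^{(t-1)},\bm{x}_i^{(t-1)},\bm{A}_{(j,i,:)})$ to be the pair $\bigl(\bm{x}_j^{(t-1)},\bm{A}_{(j,i,:)}\bigr)$ — so node $j$ forwards its whole current view plus the feature of the edge $\{i,j\}$; let $g_1^{(t)}$ collect the set of these messages over $j\in\mathcal{N}(i)$; and let $g_2^{(t)}$ merge $\bm{x}_i^{(t-1)}$ with everything received. The invariant ``$\bm{x}_i^{(t)}$ encodes all $v$ with $\mathrm{dist}(i,v)\le t$ and all edges with an endpoint within distance $t-1$ of $i$'' is then immediate from $\{v:\mathrm{dist}(i,v)\le t\}=\{i\}\cup\bigcup_{j\in\mathcal{N}(i)}\{v:\mathrm{dist}(j,v)\le t-1\}$ and the analogous identity for edges (an edge incident to $i$ itself has its feature $\bm{A}_{(j,i,:)}$ handed directly to $h^{(t)}$). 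Since $\mathrm{diam}(G)\le|V|-1$, after $T=|V|$ rounds the state of every node encodes the complete instance $(G,\bm{Z},\bm{A})$.

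It then remains to fold one final computation into $g_2^{(T)}$ (or append one extra round): each node, now holding the full instance, computes an optimizer $\bm{X}^\star\in\argmin\{\,f(\bm{X},\bm{Z},\bm{A}) : Q(\bm{X},\bm{Z},\bm{A})\le 0\,\}$, breaking ties by a fixed deterministic rule (e.g., the lexicographically smallest minimizer under the canonical node order given by the identifiers), and outputs its own row $\bm{X}^\star_{(i,:)}$. Since all nodes hold the same instance and run the same rule, they agree on one $\bm{X}^\star$, so the collected outputs $[\bm{X}^\star_{(i,:)}]_{i=1}^{|V|}$ form a global optimizer of $\mathscr{P}$.

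I expect the only real work to lie in pinning down the modeling hypotheses hidden in the phrase ``can solve it,'' rather than in any technical difficulty. First, the nodes must be mutually distinguishable: I would rely on unique node identifiers (the standard LOCAL model) or, equivalently for the wireless instances of interest, on the channel-dependent node/edge features being generic, since a deterministic uniform DMP cannot break the symmetry between two automorphic nodes when no optimizer is automorphism-invariant. Second, ``solve it'' should be read per connected component when $G$ is disconnected (the objectives in the running examples decompose over components), or one simply assumes $G$ connected. Third, one needs a minimizer to exist, which holds here since $f$ is continuous and the feasible sets in the examples are compact. Note also that nothing is claimed about the cost of the local solve — the theorem asserts only representational existence, matching the sense in which GNNs are later shown to ``represent'' DMP algorithms — so there is no computational obstacle to contend with.
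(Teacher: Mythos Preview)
Your argument is correct, but it follows a genuinely different route from the paper's own proof. The paper does not flood information; instead it invokes the permutation equivariance of the optimal map $F:(\bm{Z},\mathbf{A})\mapsto\bm{X}^\star$ (established earlier in \cite{shen2020graph}), observes that from a single node's perspective this map is permutation \emph{invariant} in the other nodes, and then appeals to the Deep Sets decomposition theorem of \cite{zaheer2017deep} to write any such invariant function as $\psi\!\bigl(\sum_j\phi(x_j)\bigr)$, which is a one-round DMP with $h=\phi$, $g_1=\sum$, $g_2=\psi$. Your flooding construction is more elementary and self-contained: it needs no external representation theorem, it handles edge features explicitly, and it surfaces the symmetry-breaking issue (unique identifiers) that the paper's proof leaves implicit. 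The paper's approach, on the other hand, ties the existence result directly to the permutation-invariance theme exploited throughout the rest of the analysis and yields a DMP with $\mathcal{O}(1)$ rather than $|V|$ rounds; since the theorem only claims existence, either bound suffices.
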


\begin{proof} The proof of Theorem \ref{fact:dmp} relies on two facts. First, as shown in \cite{shen2020graph}, any graph optimization problem is permutation invariant. Second, any permutation invariant function can be implemented by a DMP step. Thus, for any graph optimization problem, there exists a DMP algorithm that can solve it. Details are provided in Appendix \ref{app:dmp}.
\end{proof}

Although the name of ``distributed message passing algorithms'' seldom appears in the wireless communication literature, many classic algorithms belong to the DMP algorithm class. The following results show that classic algorithms for hybrid precoding and power control are DMP algorithms, which are proved by writing the iterations of classic algorithms as DMP steps. 

\begin{prop} (Riemannian gradient as a DMP algorithm) The Riemannian gradient algorithm in  \cite{yu2016alternating} for Problem \eqref{prob:hybrid} is a DMP algorithm. 
\end{prop}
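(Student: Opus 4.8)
\medskip
\noindent\textbf{Proof proposal.} The plan is to follow the recipe already used for the other DMP characterizations in this section: write one iteration of the Riemannian gradient algorithm of \cite{yu2016alternating} as a constant number of DMP rounds on the bipartite graph of Fig.~\ref{fig:hb_gm}, by exhibiting message-encoding, aggregation, and update functions $h^{(t)},g_1^{(t)},g_2^{(t)}$ whose composition reproduces its updates. First I would fix a convenient representation on the graph: antenna node $i$ carries the row $(\bm{F}_{\rm{RF}})_{(i,:)}$ of the analog precoder --- equivalently the scalar $(\bm{X}_{\rm{RF}})_{(i,1)}$ together with the label of its RF chain, which can be baked into the initialization $\bm{x}_i^{(0)}$ --- symbol node $j$ carries the row $(\bm{X}_{\rm{BB}})_{(j,:)}$, and the edge $\{i,N_t+j\}$ carries the feature ${\bf A}_{(i,N_t+j,1)}=(\bm{F}_{\rm{opt}})_{(i,j)}$. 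With this bookkeeping the $(i,j)$ entry of $\bm{F}_{\rm{RF}}\bm{F}_{\rm{BB}}$ is a sesquilinear form in the two endpoint states alone, and the objective in \eqref{prob:hybrid_partial} is a sum over the edges of the graph; this edge separability is what makes a single message-plus-pooling step enough below.

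Next I would split one Riemannian gradient step into its three ingredients and check each is DMP-expressible. Differentiating the edge-separable objective shows its Euclidean gradient with respect to a node's state is a neighbor-sum of terms of the form ``residual on edge $\{i,j\}$ times the relevant part of the other endpoint's state'', e.g.\ $-\sum_{j\in\mathcal{N}(i)}\bigl({\bf A}_{(i,N_t+j,1)}-(\bm{F}_{\rm{RF}}\bm{F}_{\rm{BB}})_{(i,j)}\bigr)\,\overline{(\bm{X}_{\rm{BB}})_{(j,:)}}$ at an antenna node, and symmetrically at a symbol node; this is exactly the shape produced by a message $\bm{m}_{j\to i}^{(t)}=h^{(t)}(\bm{x}_j^{(t-1)},\bm{x}_i^{(t-1)},{\bf A}_{(j,i,:)})$ followed by sum pooling $g_1^{(t)}$. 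The constraint set is a product manifold --- a unit circle at each antenna coordinate and a (re-scaled) Euclidean piece for $\bm{X}_{\rm{BB}}$ --- so the tangent-space projection, which acts coordinatewise as $g\mapsto g-\Re(g\,\overline{x})\,x$ on each circle, and the retraction $x\mapsto x/|x|$ depend only on a node's current state; together with the step size they assemble into an update $g_2^{(t)}(\bm{Z}_{(i,:)},\bm{x}_i^{(t-1)},\bm{y}_i^{(t)})$ of the required form, with $\bm{Z}$ vacuous since there are no node features. The partially-connected block pattern of \eqref{eq:partial} is preserved for free: projecting the Euclidean gradient onto the tangent space at a support-restricted antenna state kills the off-support coordinates, an operation that never references the node index.

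The remaining piece is combining rounds, and this is the step I expect to demand the most care. The algorithm alternates between updating $\bm{F}_{\rm{RF}}$ and $\bm{X}_{\rm{BB}}$ (with possibly several inner gradient steps each), which I would handle by letting $h^{(t)},g_1^{(t)},g_2^{(t)}$ depend on the round index, the ``frozen'' block simply being the current node state carried forward; a fixed number of algorithm iterations then maps to a fixed number of DMP rounds. The genuine difficulty is that two ingredients are not neighbor-local: enforcing $\|\bm{X}_{\rm{BB}}\|_F^2=N_{\rm{RF}}N_s/N_t$ needs the scalar $\sum_j\|(\bm{X}_{\rm{BB}})_{(j,:)}\|^2$, and any Armijo/backtracking step-size rule needs the global objective value. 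I would resolve this by using that Fig.~\ref{fig:hb_gm} is a \emph{complete} bipartite graph of diameter two: any sum over one side can be gathered at a node on the other side in one DMP round and broadcast back in one more, so each such global reduction costs only $\mathcal{O}(1)$ additional rounds that fold into the schedule. Putting the pieces together then shows the Riemannian gradient algorithm is realized by finitely many DMP rounds, i.e., it is a DMP algorithm; the routine verification of the exact function forms would be relegated to an appendix.
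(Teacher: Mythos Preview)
Your proposal is correct and follows essentially the same route as the paper: compute the Euclidean gradient of the edge-separable objective as a neighbor-sum (one message-plus-sum-pooling round from symbol nodes to antenna nodes), then apply the tangent-space projection and retraction as purely local node updates. The paper's proof does exactly this for the $\bm{F}_{\rm{RF}}$ step and stops there.

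Two small points of comparison. First, the paper takes the hidden state at antenna node $i$ to be the full row $(\bm{F}_{\rm{RF}})_{(i,:)}\in\mathbb{C}^{N_{\rm{RF}}}$ rather than your scalar-plus-RF-chain-label encoding; this sidesteps the need to ``bake in'' the block index, since the support pattern is carried automatically by the vector state and preserved by the coordinatewise projection. Second, you are more careful than the paper: you explicitly treat the $\bm{F}_{\rm{BB}}$ update, the Frobenius-norm renormalization, and the line-search step size via two-round global reductions on the diameter-two bipartite graph, whereas the paper's proof simply writes a fixed step size $\alpha$ and does not discuss the $\bm{F}_{\rm{BB}}$ side or the normalization at all. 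Your extra bookkeeping is sound and makes the argument more complete, but the core mechanism is identical.
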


\begin{proof}
We first write down the key updates of the Riemannian gradient:
\begin{equation}
\small
    \begin{aligned}
    &\nabla f(\bm{F}_{\rm{RF}}^{(t-1)}) = -2(\bm{F}_{\rm{opt}} - \bm{F}_{\rm{RF}}^{(t-1)} \bm{F}_{\rm{BB}}^{(t-1)}) (\bm{F}_{\rm{BB}}^{(t-1)})^H,\\
    &\mathrm{grad}f(\bm{F}_{\rm{RF}}^{(t-1)}) = \nabla f(\bm{F}_{\rm{RF}}^{(t)}) - \Re\{\nabla f(\bm{F}_{\rm{RF}}^{(t)}) \odot (\bm{F}_{\rm{RF}}^{(t-1)})^* \} \odot \bm{F}_{\rm{RF}}^{(t-1)},\\
    &\bm{F}_{\rm{RF} }^{(t)} = P_{\mathcal{X}}\left(\bm{F}_{\rm{RF} }^{(t-1)} - \alpha^{(t)} \cdot \mathrm{grad}f(\bm{F}_{\rm{RF}}^{(t)})\right).\label{eq:RG}
\end{aligned}
\end{equation}
Here, $\alpha^{(t)}$ is the step size at the $t$-th iteration and $P_{\mathcal{X}}$ is the projection onto the unit modulus constraint. We next show that \eqref{eq:RG} can be expressed as a DMP step, where the messages are passed from the transmit symbol nodes to the transmit antenna nodes.  The hidden state of the $i$-th node $\bm{x}_i^{(t)} \in \mathbb{C}^{N_{\rm{RF}}}$ is
\begin{align}\label{eq:edge_fea}
\bm{x}_i^{(t)} = \left\{
\begin{aligned}
&(\bm{X}_{\rm{RF}}^{(t)})_{(i,:)}^T, && 1 \leq i \leq N_t \\
&(\bm{X}_{\rm{BB}}^{(t)})_{(i-N_t,:)}^T && N_t < i,
\end{aligned}
\right. 
\end{align}
where $\bm{X}_{\rm{RF}}^{(t)} = \bm{F}_{\rm{RF}}^{(t)}$ and $\bm{X}_{\rm{BB}}^{(t)} = (\bm{F}_{\rm{BB}}^{(t)})^H$. With notations in Section \ref{sec:model}, the corresponding DMP is 
\begin{align*}
    &\bm{m}_{j \rightarrow k}^{(t)} = ({\bf A}_{(j,k,1)}^* - (\bm{x}_k^{(t-1)})^T (\bm{x}_j^{(t-1)})^*)\bm{x}_j^{(t-1)}, \\
    & 1 \leq k \leq N_t,  N_t < j \leq N_t + N_s, \\
    &\bm{y}_{k}^{(t)} = - 2\sum_{j=N_t+1}^{N_t+N_s} \bm{m}_{j\rightarrow k}^{(t)}, \quad 1 \leq k \leq N_t,\\
    &\hat{\bm{x}}_k^{(t)} = \bm{y}_{k}^{(t)} - \Re\{\bm{y}_{k}^{(t)} \odot (\bm{x}_k^{(t-1)})^* \} \odot (\bm{x}_k^{(t-1)}), \quad  1 \leq k \leq N_t,\\
    &\bm{x}_k^{(t)} = P_{\mathcal{X}}\left(\bm{x}_k^{(t-1)} - \alpha \cdot \hat{\bm{x}}_k^{(t)}\right), \quad  1 \leq k \leq N_t,
\end{align*}
where $\bm{m}_{j \rightarrow k}^{(t)} \in \mathbb{C}^{N_{\rm{RF}}}$ is the message sent from node $j$ to node $k$, $\bm{y}_{k}^{(t)} \in \mathbb{C}^{N_{\rm{RF}}}$ is the aggregated messages at the node $k$, and $\hat{\bm{x}}_k^{(t)} \in \mathbb{C}^{N_{\rm{RF}}}$ is an intermediate variable. 
\end{proof}

\begin{prop} (WMMSE as a DMP algorithm) The weighted minimum mean square error (WMMSE) algorithm (Algorithm 1 in \cite{sun2018learning}) for solving Problem \eqref{prob:sum_rate} is a DMP algorithm. 
\end{prop}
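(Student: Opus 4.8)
The plan is to rewrite the block-coordinate-descent updates that make up one WMMSE iteration for the scalar power control problem \eqref{prob:sum_rate} as two consecutive communication rounds of Algorithm \ref{alg:dmp}, exactly in the spirit of the preceding proposition. First I would recall the WMMSE updates of \cite{sun2018learning}: writing $p_k = v_k^2$ and denoting the rate weight by $\beta_k = \bm{Z}_{(k,2)}$, one iteration at node $k$ computes the MMSE receiver $u_k = |h_{k,k}| v_k / \big( \sum_{j} |h_{j,k}|^2 v_j^2 + \sigma_k^2 \big)$, the MSE weight $w_k = 1 / \big( 1 - u_k |h_{k,k}| v_k \big)$, and the projected transmit coefficient $v_k \leftarrow \big[\, \beta_k w_k u_k |h_{k,k}| \big/ \sum_{j} \beta_j w_j u_j^2 |h_{k,j}|^2 \,\big]_0^1$, where $[\cdot]_0^1$ is projection onto $[0,1]$. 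The structural observation is that node $k$ couples to the rest of the network only through a sum over its neighbors — the received interference power $\sum_{j} |h_{j,k}|^2 v_j^2$ in the first two updates, and $\sum_{j} \beta_j w_j u_j^2 |h_{k,j}|^2$ in the last — which is exactly the aggregation primitive $g_1$ of a DMP step.

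Concretely, I would let the internal state of node $k$ be $\bm{x}_k^{(t)} = (v_k^{(t)}, u_k^{(t)}, w_k^{(t)})$ and realize WMMSE iteration $t$ by DMP rounds $2t-1$ and $2t$. In round $2t-1$ the encoder outputs $\bm{m}_{j\to k} = {\bf A}_{(j,k,1)} (v_j)^2 = |h_{j,k}|^2 p_j$, the aggregation $g_1$ is summation, and the update $g_2$ combines the aggregated interference with the node features $\bm{Z}_{(k,:)} = (|h_{k,k}|^2, \beta_k, \sigma_k^2)$ and the locally available self-term $|h_{k,k}|^2 p_k$ to produce the new $u_k$ and then $w_k$, leaving $v_k$ unchanged. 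In round $2t$ the encoder outputs $\bm{m}_{j\to k} = \beta_j w_j u_j^2 \cdot |h_{k,j}|^2$, $g_1$ is again summation, and $g_2$ forms $v_k \leftarrow \big[\, \beta_k w_k u_k |h_{k,k}| \big/ \big(\text{aggregate} + |h_{k,k}|^2 \beta_k w_k u_k^2\big) \,\big]_0^1$. Initializing $v_k^{(0)}$ from any feasible $p_k^{(0)}$ and reading $p_k = (v_k^{(T)})^2$ at termination reproduces exactly the iterates $(v_k^{(t)}, u_k^{(t)}, w_k^{(t)})$ of Algorithm 1 of \cite{sun2018learning}, which establishes that WMMSE is a DMP algorithm; since the number of rounds in Algorithm \ref{alg:dmp} is unconstrained, the two-to-one correspondence between WMMSE iterations and DMP rounds is harmless.

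The one point that needs care — and the main obstacle — is the edge-feature direction in round $2t$: the message $\bm{m}_{j\to k}$ must be scaled by $|h_{k,j}|^2$, the gain from $k$'s transmitter to $j$'s receiver, whereas in Algorithm \ref{alg:dmp} the encoder of a $j\to k$ message is handed ${\bf A}_{(j,k,:)} = |h_{j,k}|^2$, the opposite link. I would resolve this by letting the feature of the (undirected) edge $\{j,k\}$ be the ordered pair of both cross-gains, so that ${\bf A}_{(j,k,:)}$ exposes $(|h_{j,k}|^2, |h_{k,j}|^2)$ to the encoder — the natural modeling choice whenever reciprocity fails, and a non-issue under reciprocity. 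The remainder is routine: checking that each self-term ($j=k$) is computable from $\bm{Z}_{(k,:)}$ and $\bm{x}_k^{(t-1)}$ alone, and that the projection $[\cdot]_0^1$ in the transmit update is precisely the feasibility operation $0 \le p_k \le 1$ of \eqref{prob:sum_rate}.
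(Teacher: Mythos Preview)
Your proposal is correct and follows essentially the same approach as the paper: carry the triple $(v_k,u_k,w_k)$ as the node state and split each WMMSE iteration into two DMP rounds, one aggregating interference to update the receiver and MSE weight, the other aggregating the weighted cross-gains to update the transmit coefficient. You are in fact more careful than the paper on the edge-feature directionality point (needing $|h_{k,j}|^2$ in a $j\to k$ message), which the paper's proof elides but whose resolution---packing both cross-gains into the edge feature---is exactly what the paper later adopts in its PCGNN architecture \eqref{gnn:pcgnn1}.
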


\begin{proof}
We first write down the key updates of WMMSE \cite{sun2018learning}
\begin{subequations}\label{eq:wmmse}
\begin{align}
    &v_k^{(t)} = \frac{w_k \alpha_k^{(t-1)} u_k^{(t-1)} |h_{k,k}| }{\sum_{j=1}^K w_j \alpha_j^{(t-1)} (u_j^{(t-1)})^2 |h_{j,k}|^2 }, \label{eq:wmmse_1}\\
    &u_k^{(t)} = \frac{|h_{k,k}|v_k^{(t)}}{\sum_{j=1}^K |h_{k,j}|^2(v_j^{(t)})^2 + \sigma_k^2 }, \label{eq:wmmse_2}\\
    &\alpha_k^{(t)} = \frac{1}{1 - u_k^{(t)} |h_{k,k}|v_k^{(t)} } \label{eq:wmmse_3},
\end{align}
\end{subequations}
where the output $p_k = (v_k)^2$ and $v_k^t, u_k^t, \alpha_k^t$ are the intermediate variables at the $t$-th iteration. We show that one iteration of \eqref{eq:wmmse_1} can be written as one iteration of a DMP step. Specifically, denoting the hidden state $\bm{x}_i^{(t)} = [v_k^{(t)},u_k^{(t)},\alpha_k^{(t)}] \in \mathbb{R}^3$ and with the notations in Section \ref{sec:model}, we have
\begin{align*}
    &m_{j\rightarrow k}^{(t)} = \bm{Z}_{(j,2)} \cdot (\bm{x}^{(t-1)}_j)_{(3)} \cdot ( (\bm{x}^{(t-1)}_j)_{(2)})^2 \cdot {\bf A}_{(j,k,1)}, \quad j \neq k, \\
    &y_k^{(t)} = \sum_{j=1}^K m_{j\rightarrow k}^{(t)}, \\
    &\bm{x}^{(t)}_k =  \left[\frac{\bm{Z}_{(k,2)} \cdot (\bm{x}^{(t-1)}_k)_{(2)} \cdot (\bm{x}^{(t-1)}_k)_{(1)} \cdot \sqrt{\bm{Z}_{(k,1)}} }{y_k^{(t)} + \bm{Z}_{(k,2)} \cdot (\bm{x}^{(t-1)}_k)_{(3)} \cdot ( (\bm{x}^{(t-1)}_k)_{(2)})^2 \cdot \bm{Z}_{(k,1)} },\right. \\
    &\left. (\bm{x}^{(t-1)}_k)_{(2)}, (\bm{x}^{(t-1)}_k)_{(3)}\right],
\end{align*}
where $m_{j \rightarrow k}^{(t)} \in \mathbb{R}$ and $y_k^{(t)} \in \mathbb{R}$ represent the message and aggregated message respectively. Similarly, \eqref{eq:wmmse_2} and \eqref{eq:wmmse_3} together are one iteration of a DMP step. Thus, a WMMSE algorithm with $T$ iterations is a DMP algorithm with at most $2T$ iterations.
\end{proof}

\textbf{Connections to message passing algorithms on graphical models:} In addition to these optimization algorithms, another important class of DMP algorithms is belief propagation (BP) type algorithms, e.g., belief propagation, generalized expectation propagation, and approximate message passing. The reason is that BP-type algorithms can be viewed as optimizers for free energy minimization on factor graphs \cite{yedidia2000bethe}. This is essentially a graph optimization problem, and thus Theorem \ref{fact:dmp} applies.

\subsection{Unifying Optimization-based and GNN-based Algorithms}
In this subsection, we bridge GNNs and optimization-based algorithms via DMP and demonstrate the capacity of GNNs in learning optimal algorithms. We begin by noting that GNNs are special cases of DMP algorithms. For example, let $\bm{x}_k^{(t)} = \bm{d}_k^{(t)}$, then ECGNN in \eqref{gnn:ecgnn} can be transformed into the following DMP:
\begin{align*}
    &\bm{m}_{j \rightarrow k}^{(t)}  = \sigma(\bm{W}_1^{(t)} \bm{x}^{(t-1)}_k + \bm{w}_2 {\bf A}_{(j,k,1)} + \bm{W}_3\bm{x}^{(t-1)}_j), \\
    &\bm{y}_k^{(t)} = \frac{1}{\mathcal{N}(k)}\sum_{j} \bm{m}_{j\rightarrow k}^{(t)}, \quad
    \bm{x}_k^{(t)} = \bm{y}_k^{(t)}.
\end{align*}
Also, due to the strong approximation ability of neural networks \cite{hornik1989multilayer}, they are able to represent DMP algorithms well. A formal statement of the relationship between GNNs and DMP algorithms is given in the following theorem.

\begin{thm}\label{thm:GNNDMP} 
	Let DMP($T$) denote the family of DMP algorithms with $T$ iterations and GNN($T$) as the family of GNNs in \eqref{gnn:agg} and \eqref{gnn:comb} with $T$ layers, then 
	\begin{enumerate}
		\item for any GNN in GNN($T$), there exists a distributed message passing algorithm in DMP($T$) that solves the same set of problems;
		\item for any algorithm in DMP($T$), there exists a GNN in GNN($T$) that solves the same set of problems.
	\end{enumerate}
\end{thm}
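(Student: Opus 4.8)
The plan is to establish the two inclusions separately: the first direction is purely structural, while the second combines the universal approximation property of neural networks with a sum-decomposition of permutation-invariant pooling. Throughout, ``solves the same set of problems'' is read as computing --- or, in the second direction, uniformly approximating on compact domains --- the same input--output map from $(\bm{Z},{\bf A})$ to $\bm{X}$, so that the attained value of the objective $f$ in \eqref{eq:cg_opt} agrees up to arbitrary accuracy.

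\emph{Direction 1 (every GNN is a DMP algorithm).} Given a GNN in GNN($T$) with aggregation networks $q_1^{(t)}$, combination networks $q_2^{(t)}$, and pooling $\text{PL}$, I would define a DMP algorithm by setting the message-encoding function $h^{(t)} := q_1^{(t)}$, the message-aggregation function $g_1^{(t)} := \text{PL}$, and the update function $g_2^{(t)} := q_2^{(t)}$, retaining the same learned weights $\mathcal{W}_1^{(t)},\mathcal{W}_2^{(t)}$. The arguments of $q_1^{(t)}$ and $q_2^{(t)}$ in \eqref{gnn:agg}--\eqref{gnn:comb} match those of $h^{(t)}$ and $g_2^{(t)}$ in Algorithm~\ref{alg:dmp} exactly, and both procedures use the same neighborhoods $\mathcal{N}(k)$; a one-line induction on $t$ then gives $\bm{d}_k^{(t)}=\bm{x}_k^{(t)}$ for all nodes $k$ and all $t\le T$, so the two procedures compute the same map. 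This is the formal version of the ECGNN reduction displayed above.

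\emph{Direction 2 (every DMP algorithm is a GNN).} Fix a DMP algorithm with functions $h^{(t)},g_1^{(t)},g_2^{(t)}$. The key observation is that $g_1^{(t)}$ acts on a \emph{multiset} of incoming messages, hence is permutation invariant; by the standard sum-decomposition of permutation-invariant set functions, on a suitable compact domain it can be written as $g_1^{(t)}(\{\bm{m}_j\}) = \rho^{(t)}\!\big(\sum_j \phi^{(t)}(\bm{m}_j)\big)$ with continuous maps $\phi^{(t)},\rho^{(t)}$ of finite (but possibly large) width. I would then instantiate a GNN with $\text{PL}=\text{SUM}$, let $q_1^{(t)}$ target the composition $\phi^{(t)}\circ h^{(t)}$, and let $q_2^{(t)}$ target the map $(\bm{d}_k,\bm{Z}_{(k,:)},\bm{a}_k)\mapsto g_2^{(t)}\big(\bm{Z}_{(k,:)},\bm{d}_k,\rho^{(t)}(\bm{a}_k)\big)$. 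Both targets are continuous functions of exactly the inputs the corresponding GNN sub-networks accept, so by the universal approximation theorem \cite{hornik1989multilayer} they can be approximated uniformly to any accuracy $\epsilon>0$ by MLPs on the relevant compact sets, which fixes $\mathcal{W}_1^{(t)},\mathcal{W}_2^{(t)}$.

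The step I expect to need the most care is controlling how these per-layer approximation errors compound over the $T$ iterations, so that the final GNN solves the same set of problems rather than merely one layer of it. Since $T$ is fixed and all the DMP functions are continuous on compact domains, I would record modulus-of-continuity (or Lipschitz) estimates for $g_1^{(t)}$ and $g_2^{(t)}$, prove by induction on $t$ that the layer-$t$ GNN states remain within an $\mathcal{O}(\epsilon)$-neighborhood of the DMP states uniformly over all nodes and all admissible inputs, and then use continuity of $f$ in \eqref{eq:cg_opt} to conclude that the optimality gap is also $\mathcal{O}(\epsilon)$; as $\epsilon$ is arbitrary, the claim follows. The detailed bookkeeping would be deferred to an appendix.
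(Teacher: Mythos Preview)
Your proposal is correct and follows essentially the same approach the paper has in mind: the paper's own proof is a two-line deferral to Theorem~7 of \cite{shen2020graph} (equivalently Theorem~3.1 of \cite{loukas2019graph}) ``with $\bm{x}_i^{(t)}$ added into the messages,'' and those results are precisely the structural inclusion for Direction~1 together with the Deep-Sets sum-decomposition plus universal approximation for Direction~2 that you spell out. Your write-up is in fact more detailed than the paper's, which does not carry out the layerwise error-propagation bookkeeping you outline; that extra care is a welcome addition rather than a deviation.
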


\begin{proof}
This result is proved by following Theorem 7 in \cite{shen2020graph} or Theorem 3.1 in \cite{loukas2019graph} and adding $\bm{x}_i^{(t)}$ into the messages.
\end{proof}

Theorem \ref{thm:GNNDMP} shows that GNNs are special cases of DMP algorithms, but they can represent any DMP algorithm with proper weights that are learned from data. For classic optimization-based DMP algorithms, the message encoding function, message aggregation function, and update function are handcrafted. On the contrary, GNNs directly learn these functions from data, and thus the number of iterations can be significantly reduced and the performance may be improved. For example, for the power control problem, a GNN with $2$ iterations already outperforms WMMSE, a handcrafted DMP algorithm, with $30$ iterations (Fig. 3 in \cite{shen2020graph}). Such advantage is a major motivation underlying the recent interests in ``learning to optimize'' approaches \cite{sun2018learning,shen2018lora,shen2020graph}. By combining Theorem \ref{fact:dmp} and Theorem \ref{thm:GNNDMP}, we reach the conclusion that GNNs are powerful in solving graph optimization problems, which is stated in the next corollary.

\begin{cor}\label{cor:exist}
    For any graph optimization problem as defined in \eqref{eq:cg_opt}, there exists a GNN that can solve it.
\end{cor}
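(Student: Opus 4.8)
The plan is to derive the corollary by chaining Theorem~\ref{fact:dmp} with the second part of Theorem~\ref{thm:GNNDMP}. Fix an arbitrary graph optimization problem $\mathscr{P}$ of the form \eqref{eq:cg_opt} on communication graphs. First I would invoke Theorem~\ref{fact:dmp} to obtain a DMP algorithm $\mathcal{A}$ that solves $\mathscr{P}$. The crucial point to extract from the proof of Theorem~\ref{fact:dmp} (Appendix~\ref{app:dmp}) is that $\mathcal{A}$ terminates after a \emph{finite} number $T$ of iterations: since the optimal solution map is permutation invariant and each communication graph is finite, it suffices to let messages propagate across the graph, so $T$ can be bounded in terms of the graph diameter. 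Hence $\mathcal{A}\in\mathrm{DMP}(T)$ for this finite $T$.

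Second, I would apply the second item of Theorem~\ref{thm:GNNDMP}: since $\mathcal{A}\in\mathrm{DMP}(T)$, there exists a GNN in $\mathrm{GNN}(T)$ that solves the same set of problems as $\mathcal{A}$, in particular $\mathscr{P}$. The input/output conventions line up without extra work, because the DMP algorithm (the output step of Algorithm~\ref{alg:dmp}) and the GNN in \eqref{gnn:agg}--\eqref{gnn:comb} both read the node features $\bm{Z}$ and edge features ${\bf A}$ and return a node-indexed collection $[\bm{x}_i^{(T)}]_{i=1}^{|V|}$, which is exactly the variable $\bm{X}$ in \eqref{eq:cg_opt}. Chaining the two inclusions yields the claimed GNN.

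The main obstacle is being precise about the word ``solve'' in the second step. Theorem~\ref{thm:GNNDMP} rests on the universal approximation property of neural networks \cite{hornik1989multilayer}, so a GNN reproduces the handcrafted encoding, aggregation, and update maps of $\mathcal{A}$ only up to an arbitrarily small error on a compact domain; the honest statement is therefore that for every $\epsilon>0$ there is a GNN whose output is within $\epsilon$ of the optimizer returned by $\mathcal{A}$. To finish I would (i) observe that in the wireless settings of Section~\ref{sec:model} the feature spaces and the hidden states lie in compact sets (bounded channel gains, powers in $[0,1]$, unit-modulus entries), and (ii) propagate the per-layer approximation error through the \emph{fixed} finite number of layers $T$ using Lipschitz continuity of the continuous maps involved, taking the per-layer tolerance to be $\epsilon$ divided by the product of the relevant Lipschitz constants. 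This step is routine precisely because $T$ is finite, which is what the first step secures; the substantive work has already been carried out in Theorems~\ref{fact:dmp} and~\ref{thm:GNNDMP}.
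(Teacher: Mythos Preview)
Your approach is correct and matches the paper's: the corollary is obtained exactly by chaining Theorem~\ref{fact:dmp} with item~(2) of Theorem~\ref{thm:GNNDMP}, and the paper treats it as an immediate consequence with no further argument. Your additional care about the finiteness of $T$ and the $\epsilon$-approximation propagation goes beyond what the paper supplies; one minor remark is that the construction in Appendix~\ref{app:dmp} actually yields a one-step DMP (via the DeepSets decomposition), so the bound on $T$ by the graph diameter, while harmless, is not how the paper secures finiteness.
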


Corollary \ref{cor:exist} demonstrates the potential of GNNs in solving graph optimization problems. But this support from the algorithmic perspective alone cannot fully explain the advantages of GNN-based methods. Particularly, as MLPs are also universal approximators \cite{hornik1989multilayer,sun2018learning}, they have the ability to approximate DMP algorithms too. Thus, we still need a result to directly compare GNN-based and MLP-based methods. This is left to the next section, where we will leverage a fine-grained analysis on the generalization performance to characterize the benefits of adopting GNNs over MLPs.

\section{Generalization Analysis of Graph Neural Networks}\label{sec:aa} 
In this section, we theoretically prove that GNNs are superior to MLPs in solving problems in wireless networks, in both the generalization error (i.e., the performance of solving a problem instance of \eqref{eq:cg_opt}) and sample efficiency (i.e., the number of samples to train the neural network). We first introduce PAC-learning as a general analytical framework and present the algorithmic alignment (AA) as the main tool. We will then characterize and compare the generalization error and sample efficiency of GNNs and MLPs in solving graph optimization problems.
\subsection{Generalization Analysis via PAC-learning}
PAC-learning is a standard framework to analyze the generalization error of machine learning methods with a finite amount of training samples. Intuitively, it addresses the following problem: \emph{How many training samples are needed to guarantee that the generalization error is less than $\epsilon$ with a probability of at least $1-\delta$}. This framework is formally defined as follows.

\begin{defn}\label{def:PAC} (PAC Learning \cite{valiant1984theory,xu2019what}) Fix an error parameter $\epsilon > 0$ and a failure probability $\delta \in (0,1)$. Suppose $\{\bm{x}_i,\bm{y}_i\}_{i=1}^N$ are $N$ i.i.d. samples drawn from a distribution $\mathcal{D}$ with each sample satisfying $y_i = g(\bm{x}_i)$. Let $f=\mathcal{A}(\{\bm{x}_i,\bm{y}_i\}_{i=1}^N)$ be the function generated by a learning algorithm $\mathcal{A}$. Then $g$ is $(N, \epsilon, \delta)$-learnable with algorithm $\mathcal{A}$ if $\mathbb{P}_{\bm{x} \sim \mathcal{D}} [\|f(\bm{x}) - g(\bm{x})\| \leq \epsilon ] \geq 1 - \delta$. The sample complexity, denoted by $C_{\mathcal{A}}(g,\epsilon,\delta)$, is the minimum $N$ that guarantees $g$ is $(N,\epsilon,\delta)$-learnable with $\mathcal{A}$.
\end{defn}

When applying deep learning to solve optimization problems in wireless communications, $g(\cdot)$ is an oracle algorithm that can output an optimal solution for the considered problem (e.g., a DMP algorithm), $f(\cdot)$ is a neural network, and $\mathcal{A}$ is the gradient descent algorithm to train the neural network. Intuitively, we want to learn an algorithm $f$ with a low sample complexity (i.e., requiring a small number of samples), so that $f$ is close to the oracle algorithm $g$ with a high probability. Applying the PAC learning framework for generalization analysis requires the existence of an $f$ that can well approximate $g$. The universal approximation property of neural networks provides this guarantee \cite{hornik1989multilayer}. The next step is to characterize the sample complexity for a given neural architecture, i.e., how hard it is to learn $f$ with the gradient descent algorithm. This  will be provided in the next subsection.

\subsection{Analysis of Neural Networks via Algorithmic Alignment}
Our main purpose in the generalization analysis is to compare the effectiveness of different neural architectures in solving graph optimization problems. As far as we know, the recently proposed algorithmic alignment (AA) framework \cite{xu2019what} is the only approach that allows tractable comparisons between deep neural architectures for PAC learning. The AA framework is formally defined as follows.

\begin{defn}\label{def:aa} (Algorithmic alignment \cite{xu2019what}) Let $g$ be an oracle function that generates labels, and $\mathcal{M}$ is a neural network consisting of sub neural networks $\mathcal{M}_i, i = 1,\cdots,n$. The module functions $f_1, \cdots, f_n$ generate $g$ for $\mathcal{M}$ if, by replacing $\mathcal{M}_i$ with $f_i$, the neural network $\mathcal{M}$ is identical to $g$. The neural network $\mathcal{M}$ is said to $(N,\epsilon,\delta)$-algorithmically align with the oracle function $g$ if (1) $f_1,\cdots, f_n$ generate $g$; (2) there are learning algorithms $\mathcal{A}_i$ for the $\mathcal{M}_i$'s such that $n\cdot \max_{i} \mathcal{C}_{\mathcal{A}_i}(f_i,\epsilon,\delta) \leq N$.
\end{defn}

In Definition \ref{def:aa}, the original task is to learn $g$, which is a challenging problem. If we could find a neural network architecture that is suitable for this task, it helps to break the original task into simpler sub-tasks, i.e., to learn $f_1, \cdots, f_n$ instead. The first condition, i.e., $f_1,\cdots, f_n$ generate $g$, is needed to guarantee the optimal substructure. This means that if the sub-tasks $f_1, \cdots, f_n$ can be learned optimally, then $g$ can be learned optimally. The second condition is to provide the sample complexity guarantee, which is defined in the PAC-learning framework. Therefore, once we establish the algorithmic alignment between an algorithm and a neural network, we can analyze the sample complexity and generalization error, as revealed by the next theorem. 

\begin{thm}\label{thm:aa} (PAC-learning via AA \cite{xu2019what}) Fix parameters $\epsilon$ and $\delta$. Suppose training samples are generated as $\{S_i,y_i\}_{i=1}^M \sim \mathcal{D}$ with an oracle function $y_i = g(S_i)$ for some $g$. Suppose $\mathcal{M}_1,\cdots, \mathcal{M}_n$ are the neural network modules in the sequential order, and the neural network $\mathcal{M}$ composed by $\mathcal{M}_1,\cdots, \mathcal{M}_n$ and the oracle function $g$ are $(N,\epsilon,\delta)$-algorithmically aligned via functions $f_1,\cdots, f_n$. Then, $g$ is $(N,O(\epsilon),O(\delta))$-learnable by $\mathcal{M}$.
\end{thm}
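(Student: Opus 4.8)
The final statement to prove is Theorem~\ref{thm:aa} (PAC-learning via AA), which asserts that if a neural network $\mathcal{M}$ built from modules $\mathcal{M}_1,\dots,\mathcal{M}_n$ algorithmically aligns with an oracle $g$ via module functions $f_1,\dots,f_n$, then $g$ is $(N, O(\epsilon), O(\delta))$-learnable by $\mathcal{M}$. Since this is quoted from \cite{xu2019what}, my plan is to reconstruct the standard argument: the proof is essentially a union bound over the per-module failures combined with a Lipschitz-propagation argument for how per-module errors accumulate through the sequential composition.

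\textbf{Plan.} First, I would invoke the definition of algorithmic alignment: by hypothesis $n \cdot \max_i \mathcal{C}_{\mathcal{A}_i}(f_i, \epsilon, \delta) \le N$, so each module budget satisfies $\mathcal{C}_{\mathcal{A}_i}(f_i,\epsilon,\delta) \le N/n$. Hence, allocating $N/n$ of the $N$ training samples to training module $\mathcal{M}_i$ with its algorithm $\mathcal{A}_i$, we learn a function $\hat f_i = \mathcal{A}_i(\cdot)$ such that $\mathbb{P}[\|\hat f_i(\bm{x}) - f_i(\bm{x})\| \le \epsilon] \ge 1-\delta$ over the relevant input distribution for that module. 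Second, I would take a union bound: the event that \emph{all} $n$ modules simultaneously achieve their $\epsilon$-accuracy holds with probability at least $1 - n\delta = 1 - O(\delta)$ (treating $n$ as a constant with respect to the asymptotic statement). Third, condition on this good event and propagate the errors: since $f_1,\dots,f_n$ generate $g$ (so replacing each $\mathcal{M}_i$ by $f_i$ recovers $g$ exactly), the output of $\mathcal{M}$ with the learned modules differs from $g$'s output only through the accumulated substitution errors $\|\hat f_i - f_i\| \le \epsilon$. Using Lipschitz continuity of each module function (a regularity assumption implicit in the AA framework of \cite{xu2019what}), a telescoping/hybrid argument over the $n$ stages bounds the final deviation by $\big(\sum_i \prod_{j>i} L_j\big)\epsilon = O(\epsilon)$, where $L_j$ are the Lipschitz constants — again $O(\epsilon)$ since $n$ and the $L_j$ are constants. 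Combining, $\mathbb{P}[\|\mathcal{M}(\bm{x}) - g(\bm{x})\| \le O(\epsilon)] \ge 1 - O(\delta)$, which is exactly $(N, O(\epsilon), O(\delta))$-learnability.

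\textbf{Main obstacle.} The delicate point is the second step's handling of the \emph{input distributions} to the intermediate modules. Module $\mathcal{M}_i$ for $i \ge 2$ does not see samples drawn from the original $\mathcal{D}$; it sees the outputs of $\hat f_1,\dots,\hat f_{i-1}$, which are themselves noisy. A fully rigorous argument must ensure that the sample complexity bound $\mathcal{C}_{\mathcal{A}_i}$ is stated with respect to the distribution the module actually encounters, and that the small perturbation of this input distribution (caused by upstream errors) does not blow up the learning guarantee — this is where one uses that the upstream errors are already controlled at scale $\epsilon$ and that $\mathcal{A}_i$ is robust to such perturbations, or alternatively trains modules sequentially on the corrected intermediate representations. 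In the original treatment this is finessed by the sequential-learning assumption and by absorbing constants into the $O(\cdot)$; I would follow that route, remarking that the constants hidden in $O(\epsilon)$ and $O(\delta)$ depend on $n$ and on the Lipschitz constants of the module functions, and refer to \cite{xu2019what} for the full details rather than reproducing the perturbation bookkeeping here.
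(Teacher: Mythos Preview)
The paper does not provide its own proof of Theorem~\ref{thm:aa}; it is stated as a known result quoted directly from \cite{xu2019what}, with no argument given in the text or appendices. Your reconstruction of the union-bound plus Lipschitz error-propagation argument is a faithful sketch of the original proof in \cite{xu2019what}, and your identification of the distributional-shift issue for intermediate modules as the main subtlety is accurate, but there is nothing in this paper to compare it against.
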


In Definition \ref{def:aa}, the degree of alignment is measured by the sample complexity, i.e., a smaller $N$ implies better alignment. Theorem \ref{thm:aa} further demonstrates that a better alignment results in a smaller sample complexity and generalization error. Thus, the AA framework can provide direct guidelines for designing neural architectures. To apply this powerful framework to comparing GNNs and MLPs, we first establish the algorithmic alignment between GNNs and DMP algorithms. The following example provides an illustration for a particular but representative GNN architecture.

\begin{exmp}\label{exmp:aa} (Alignment between GNNs and DMP algorithms) In this example, we construct a representative GNN, which algorithmically aligns with DMP algorithms. For the GNNs introduced in Section \ref{subsec:mpgnn}, the design space is the aggregation neural network $q_1(\cdot)$, combination neural network $q_2(\cdot)$, and the pooling function. We use MLPs as $q_1(\cdot)$ and $q_2(\cdot)$, which are commonly adopted in most GNNs \cite{wang2019deep}. The pooling function has various choices, e.g., sum, max, and attention. To be general, we do not explicitly specify a pooling function and instead let $q_2(\cdot)$ learn a suitable pooling function\footnote{For other pooling functions, we can follow the procedure in this example and obtain a bound of the same order.}. Specifically, we consider the following GNN:
\begin{equation}\label{gnn:wcgcn}
    \begin{aligned}
    \bm{a}_k^{(t)} &= \text{CONCAT}(\text{MLP1}^{(t)}(\bm{d}_j^{(t-1)}, \bm{d}_k^{(t-1)}, {\bf A}_{(j,k,:)})), \\
    \bm{d}_k^{(t)} &= \text{MLP2}^{(t)}(\bm{Z}_{(k,:)}, \bm{a}_k^{(t)}, \bm{x}_k^{(t-1)}),
\end{aligned}
\end{equation}
where $\text{MLP1}^{(t)}(\cdot)$ and $\text{MLP2}^{(t)}(\cdot)$ are two learnable MLPs at the $t$-th layer, and $\text{CONCAT}(\cdot)$ concatenates all the input into a vector. To apply the AA framework, we define the module functions $f_1, \cdots, f_{2T}$ and neural network modules $\mathcal{M}_1, \cdots, \mathcal{M}_{2T}$ as
\begin{equation}\label{eq:align}
    \begin{aligned}
    &f_i = \left\{
\begin{aligned}
&h^{(\lfloor i/2 \rfloor)}, &&\text{if }i \text{ is odd},  \\
&g_2^{(\lfloor i/2 \rfloor)} \circ g_1^{(\lfloor i/2 \rfloor)}, && \text{o.w.},
\end{aligned}
\right. \\ 
&\mathcal{M}_i = \left\{
\begin{aligned}
&\text{MLP1}^{(\lfloor i/2 \rfloor)}, &&\text{if }i \text{ is odd},  \\
&\text{MLP2}^{(\lfloor i/2 \rfloor)}, && \text{o.w.},
\end{aligned}
\right.
\end{aligned}
\end{equation}
where $h$, $g_1$, and $g_2$ are the functions in Algorithm \ref{alg:dmp}, and $\circ$ is a composition function. By replacing $\mathcal{M}_i$ with $f_i$ in \eqref{gnn:wcgcn}, we obtain the following algorithm:
\begin{align*}
    &\bm{a}_k^{(t)} = \text{CONCAT}(h^{(t)}(\bm{d}_j^{(t-1)}, \bm{d}_k^{(t-1)}, {\bf A}_{(j,i,:)})), \\
    &\bm{d}_k^{(t)} = g_2^{(t)} (\bm{Z}_{(k,:)}, \bm{x}_k^{(t-1)}, g_1^{(t)} (\bm{a}_k^{(t)}) ),
\end{align*}
which is identical to Algorithm \ref{alg:dmp}. In this way, the first condition in Definition \ref{def:aa} is satisfied. The second condition is satisfied if we set
\begin{equation}\label{bound:C}
\small
    \begin{aligned}
    N > 2T \cdot \frac{C}{|V|}, \quad C = \max_{t = 1, \cdots, T} (C_{\mathcal{A}}(h^{(t)},\epsilon, \delta), C_{\mathcal{A}}(g_1^{(t)} \circ g_2^{(t)},\epsilon, \delta)),
\end{aligned}
\end{equation}
where $C_{\mathcal{A}}(\cdot)$ is defined in Definition \ref{def:PAC}. This argument is more complicated and the proof is given in Appendix \ref{app:lem_gnn_mlp}. 
\end{exmp}

Next, we will compare the generalization performance of GNN-based and MLP-based methods. To apply MLP-based methods for the graph optimization problems in \eqref{eq:cg_opt}, we first flatten the node features $\bm{Z}$ and edge features ${\bf A}$ into a vector \cite{sun2018learning,liang2018towards}. Then an MLP is adopted to map the input vector to the solution of Problem \eqref{eq:cg_opt}. Intuitively, MLPs are unstructured, and thus they will not align with the DMP algorithm for solving \eqref{eq:cg_opt} as well as GNNs. The following result rigorously verifies this conclusion.

\begin{lem}\label{lem:gnn_mlp} (GNNs aligns with DMP algorithms better than MLPs)
    Suppose the target function is a $T$-iteration DMP algorithm $D$, then the GNN architecture in \eqref{gnn:wcgcn}  $(O(TC/|V|), O(\epsilon), O(\delta))$-algorithmically aligns with $D$ while the MLP architecture $(O(TC|V|), O(\epsilon), O(\delta))$-algorithmically aligns with $D$, where $C = C_{\mathcal{A}}(g_2^{(t)} \circ g_1^{(t)}  \circ h^{(t)}, \epsilon, \delta)$.
\end{lem}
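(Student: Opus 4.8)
The plan is to establish the two alignment bounds separately, both by exhibiting an explicit module decomposition and invoking Theorem \ref{thm:aa} (PAC-learning via AA). For the GNN direction, I would essentially reuse the construction in Example \ref{exmp:aa}: decompose the $T$-iteration DMP algorithm $D$ into its per-iteration primitives $h^{(t)}$, $g_1^{(t)}$, $g_2^{(t)}$, and align these with the sub-MLPs $\text{MLP1}^{(t)}$, $\text{MLP2}^{(t)}$ in the GNN of \eqref{gnn:wcgcn}, giving $2T$ modules. The key observation making the per-node sample complexity small is that each sub-MLP is applied \emph{once per node (or per edge) per training graph}, so a single labeled graph with $|V|$ nodes already supplies $\Theta(|V|)$ training examples for each module function; hence the effective number of graph-level samples needed is $N = \Theta\bigl(2T \cdot C/|V|\bigr)$ where $C$ upper-bounds the sample complexity of learning any single primitive. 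Collapsing $g_2^{(t)}\circ g_1^{(t)}\circ h^{(t)}$ into the stated constant $C = C_{\mathcal{A}}(g_2^{(t)}\circ g_1^{(t)}\circ h^{(t)},\epsilon,\delta)$ (rather than the max over the two pieces as in Example \ref{exmp:aa}) only changes the bound by a constant, which is absorbed in the $O(\cdot)$; I would remark on this so the constant $C$ matches the statement.

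For the MLP direction, the point is exactly the absence of this $|V|$-fold data reuse. A flattened-input MLP that maps the vectorized $(\bm{Z},\bm{A})$ to $\bm{X}$ can still \emph{represent} $D$ — unroll all $T$ iterations and all $|V|$ node updates into one giant feed-forward computation — so the first condition of Definition \ref{def:aa} (the module functions generate $g$) is met by taking the modules to be the copies of $h^{(t)}, g_1^{(t)}\circ g_2^{(t)}$ laid out across nodes and iterations. But now there are $\Theta(T|V|)$ such modules, and crucially each module sees only \emph{one} input per training graph rather than $|V|$, so no per-node amortization occurs. The AA bookkeeping then gives $N \geq n \cdot \max_i C_{\mathcal{A}_i}(f_i,\epsilon,\delta)$ with $n = \Theta(T|V|)$ and each $C_{\mathcal{A}_i} = \Theta(C)$, yielding $N = O(TC|V|)$. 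I would phrase this carefully as: the \emph{best} alignment one can certify for the unstructured MLP via this decomposition is $(O(TC|V|),O(\epsilon),O(\delta))$, since the MLP's lack of weight sharing forces every copy of a primitive to be learned from scratch.

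Two technical points deserve care. First, the composition step — turning the sequential application of $h^{(t)}$, $g_1^{(t)}$, $g_2^{(t)}$ (and across layers) into the AA "sequential modules" picture — needs the optimal-substructure part of Definition \ref{def:aa}: one must check that composing the individually-learned approximations still approximates $D$ with only $O(\epsilon)$ error blow-up, which follows from Lipschitz continuity of the primitives (standard, and presumably handled in Appendix \ref{app:lem_gnn_mlp}). Second, and this is the main obstacle, one must argue the MLP bound is genuinely \emph{tight in order}, not merely an artifact of a wasteful decomposition — i.e., that no cleverer module assignment recovers the $1/|V|$ factor. The honest claim is that \emph{any} decomposition of a permutation-symmetric target into AA modules using an unstructured MLP cannot exploit the node-level repetition because the MLP's parameters are not tied across the $|V|$ coordinate blocks; making this rigorous (rather than just asserting it) is where the real work lies, and I would lean on the fact that the MLP treats its $\Theta(|V|^2 d_2)$ input coordinates as distinct features with independent weights, so the "sub-task" of processing node $k$ shares no learnable parameters with that of node $k'$. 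Granting that, the ratio of the two bounds is $\Theta(|V|^2)$ in sample complexity and (via Theorem \ref{thm:aa}'s translation to generalization error) $\Theta(|V|)$ in error, which is exactly the $\mathcal{O}(n)$/$\mathcal{O}(n^2)$ separation advertised in the abstract and which Theorem \ref{thm:gen_gnn_mlp} will then state.
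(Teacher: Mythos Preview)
Your proposal is correct and follows essentially the same approach as the paper: the GNN side reuses the $2T$-module decomposition of Example~\ref{exmp:aa} together with the $|V|$-fold sample reuse from weight sharing, and the MLP side unrolls $D$ into $T|V|$ sequential modules $f_i = g_2^{(\lfloor i/|V|\rfloor)}\circ g_1^{(\lfloor i/|V|\rfloor)}\circ h^{(\lfloor i/|V|\rfloor)}$ with $\mathcal{M}_i$ taken as sub-MLPs (``part of an MLP is also an MLP''), yielding $N > T|V|\cdot C$ with no amortization. One remark: your concern about proving the MLP bound is \emph{tight in order} goes beyond what the lemma actually asserts and beyond what the paper proves---the statement only claims the MLP $(O(TC|V|),O(\epsilon),O(\delta))$-algorithmically aligns, i.e., exhibits \emph{some} decomposition achieving that bound, not that no better decomposition exists---so you can drop that discussion without loss.
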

\begin{proof}
    The proof is given in Appendix \ref{app:lem_gnn_mlp}.
\end{proof}

With Lemma \ref{lem:gnn_mlp} at hand, by adopting Theorem \ref{thm:aa} and the generalization bound of neural networks \cite{arora2019fine,koltchinskii2002empirical}, we obtain a quantitative result for comparing the GNN architecture in \eqref{gnn:wcgcn} and the MLP architecture in the following theorem.

\begin{thm}\label{thm:gen_gnn_mlp} Under the setting of Lemma \ref{lem:gnn_mlp}, the following two conclusions hold:
    \begin{enumerate}
        \item For a given node number $|V|$, to achieve an error $\epsilon$ and a failure probability $\delta$, the MLP architecture requires $\mathcal{O}(|V|^2)$ times more samples than that for the GNN in \eqref{gnn:wcgcn};
        \item Consider a given training sample size $N$, a failure probability $\delta$, and a node number $|V|$. If GNN in \eqref{gnn:wcgcn} achieves an error of $\epsilon$ and $\delta > \Omega(\exp(-\epsilon))$, then the MLP architecture achieves an error of $\mathcal{O}(|V|\epsilon)$.
    \end{enumerate}
\end{thm}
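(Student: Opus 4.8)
## Proof Proposal for Theorem \ref{thm:gen_gnn_mlp}

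The plan is to derive both claims as straightforward corollaries of Lemma \ref{lem:gnn_mlp} combined with the PAC-learning-via-alignment machinery of Theorem \ref{thm:aa} and a standard neural-network generalization bound. First I would recall the two alignment statements from Lemma \ref{lem:gnn_mlp}: the GNN $(O(TC/|V|), O(\epsilon), O(\delta))$-algorithmically aligns with the target DMP algorithm $D$, while the MLP only $(O(TC|V|), O(\epsilon), O(\delta))$-aligns with it, where $C = C_{\mathcal{A}}(g_2^{(t)} \circ g_1^{(t)} \circ h^{(t)}, \epsilon, \delta)$. Feeding these into Theorem \ref{thm:aa} immediately yields that $D$ is $(O(TC/|V|), O(\epsilon), O(\delta))$-learnable by the GNN and $(O(TC|V|), O(\epsilon), O(\delta))$-learnable by the MLP.

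For claim (1), I would simply take the ratio of the two sample-complexity quantities. Since both expressions share the common factor $TC$ (and $C$ is defined identically in both cases, up to the $O(\cdot)$ absorbing the difference between the composition $g_2 \circ g_1 \circ h$ and its components, which is constant-order by the universal approximation argument), the MLP requires $N_{\text{MLP}} = O(TC|V|)$ samples whereas the GNN requires $N_{\text{GNN}} = O(TC/|V|)$ samples to reach error $\epsilon$ and failure probability $\delta$. Hence $N_{\text{MLP}}/N_{\text{GNN}} = O(|V|^2)$, which is exactly the stated conclusion. The only subtlety is to make sure the hidden constants (the dependence on $T$, on the Lipschitz constants of the DMP functions, and on the per-module sample complexity $C$) are genuinely common to both architectures so that they cancel in the ratio; this is where I would appeal to the construction in Example \ref{exmp:aa} and Appendix \ref{app:lem_gnn_mlp}, since the module decomposition there uses the same $h$, $g_1$, $g_2$ for both.

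For claim (2), I would invert the relationship: fix the sample budget $N$, the failure probability $\delta$, and the node count $|V|$, and ask what error each architecture can guarantee. Here I need the \emph{quantitative} form of the generalization bound for overparametrized neural networks trained by gradient descent (e.g., the Rademacher-complexity / NTK bounds of \cite{arora2019fine,koltchinskii2002empirical}), which typically gives an error of the form $\epsilon \sim \sqrt{\text{complexity}/N}$ or, after the alignment reduction, $\epsilon$ scaling like (per-module complexity)$\,/\,N$ times the number of modules. Plugging $N = O(TC_{\text{GNN}}/|V|)$ for the GNN achieving error $\epsilon$, and then asking what error the MLP attains with the \emph{same} $N$, the $|V|$-vs-$1/|V|$ discrepancy in the alignment bounds translates into an error amplification by a factor of $|V|^2$ inside the square root, i.e., $O(|V|)$ on the error itself, giving the MLP an error of $O(|V|\epsilon)$. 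The side condition $\delta > \Omega(\exp(-\epsilon))$ is there to ensure the failure-probability term in the concentration bound does not dominate the $\epsilon$-scaling when we re-solve for the error; I would verify this by tracking the $\log(1/\delta)$ term in the generalization bound and checking it stays lower-order.

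The main obstacle I anticipate is bookkeeping the $O(\cdot)$ constants carefully enough that the cancellation in claim (1) and the precise $O(|V|\epsilon)$ rate in claim (2) are justified rather than merely plausible — in particular, confirming that the per-module sample complexity $C$ and the factor $T$ enter both the GNN and MLP bounds in exactly the same way (so the architecture-dependent part is purely the $|V|$ versus $1/|V|$ factor from Lemma \ref{lem:gnn_mlp}), and that the generalization bound used is of a form where sample complexity scales linearly in the alignment parameter $N$ so that the error inversion in claim (2) is clean. Everything else is routine substitution.
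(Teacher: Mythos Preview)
Your proposal is correct and follows essentially the same route as the paper: claim~(1) is obtained by combining Lemma~\ref{lem:gnn_mlp} with Theorem~\ref{thm:aa} and taking the ratio of the two sample complexities, and claim~(2) is obtained by invoking the $\mathcal{O}(1/\sqrt{n})$ generalization bound of \cite{koltchinskii2002empirical} (the paper's Lemma~\ref{lem:mlp_gen}) so that the $|V|^2$ gap in sample complexity becomes a $|V|$ gap in error. Your treatment of the $\delta > \Omega(\exp(-\epsilon))$ side condition and the bookkeeping of shared constants is, if anything, more explicit than the paper's own argument.
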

\begin{proof}
    The proof is given in Appendix \ref{app:gen_gnn_mlp}.
\end{proof}
Theorem \ref{thm:gen_gnn_mlp} suggests that GNNs are superior to MLPs in learning DMP algorithms in terms of the sample complexity and generalization error, and the performance gain grows with the number of nodes in the graph. This verifies the advantages in the scalability and sample efficiency of GNNs over MLPs that have been empirically observed in \cite{shen2019graph,lee2019graph,eisen2019optimal,shen2020graph,guo2021learning}.

\section{Guidelines of Using Graph Neural Networks}\label{sec:arch}
The generalization analysis in the previous section not only theoretically demonstrates the advantages of GNNs over MLPs for solving wireless communication problems, but also can be used to design GNN architectures. In this section, we first introduce architecture design guidelines for performance improvement over the basic GNN architectures introduced in Section \ref{subsec:mpgnn}. We will then apply them to the problems in Section \ref{sec:model}. Finally, we will compare GNNs with classic methods and model-driven deep learning.
\subsection{General Guidelines for Theory-Guided Performance Enhancement}
To further improve the bound in \eqref{bound:C}, i.e., to reduce $C$, we should design GNN architectures to make $h^{(t)}$, $g_1^{(t)}$, and $g_2^{(t)}$ easier to learn. There are four commonly adopted approaches as listed below.

\begin{enumerate}
    \item \emph{To simplify the message passing scheme}: The complexity of the functions often increases with the input dimension, which depends on the number of incoming messages to a node. Thus, the complexity can be reduced if we can simplify the message-passing scheme. We will illustrate this approach in Section \ref{arch:cf}.
    \item \emph{To find a suitable pooling function}: The most widely adopted pooling functions are the max, sum, or attention aggregation. The max aggregation focuses on several neighbors that are most influential (Theorem 6 in \cite{shen2020graph}). It is suitable when the neighbors' influence is sparse, or the problem parameters are noisy. The sum aggregation gives summary statistics of the neighbors. It is useful when we aim to obtain a summary of the neighbors. The attention-based aggregation focuses on neighbors that have a large correlation with the current node. It can be used when the mutual correlations between the nodes are important \cite{li2021heterogeneous}.
    \item \emph{To improve the aggregation neural network $q_1(\cdot)$ and combination neural network $q_2(\cdot)$}: In most GNNs \cite{wang2019deep}, the architectures for $q_1$ and $q_2$ are one-layer MLPs. When a good classic DMP algorithm is available, one can design these neural networks via standard unrolling techniques \cite{he2019model,hu2020iterative}, as will be illustrated in Section \ref{arch:hb}.
    \item \emph{To apply input embedding}: When the dimension of the edge feature is much smaller than that of the hidden dimension, the neural network often ignores the edge feature. Under this circumstance, it is often useful to employ an embedding neural network that lifts the edge feature into a higher dimension.
\end{enumerate}

\subsection{Power Control in D2D Networks}\label{arch:d2d}
In this application, we apply the second and the fourth guidelines to enhance the performance with ECGNN as a basic architecture, which was introduced in Example \ref{exmp:ecgnn}. In the ECGNN, the dimensions of the node feature and edge feature are much smaller than the dimensions of the hidden layers. Hence, we use a one-layer MLP to embed the node feature and edge feature. As observed in \cite{sun2018learning}, the output power in the sum rate maximization problem is sparse. Thus, the influence from neighbors is sparse and better performance may be achieved with max aggregation. Specifically, we first embed the node feature and edge feature as
\begin{align}\label{gnn:pcgnn1}
    \bm{x}^{(0)}_k = \sigma(\bm{W}^{(0)}_1 h_{k,k}), \quad \bm{e}_{j,k} = \sigma(\bm{W}_2^{(0)}[h_{j,k}, h_{k,j}]),
\end{align}
where $\bm{W}_1^{(0)}$ and $\bm{W}_2^{(0)}$ are learnable weights. For the GNN design, we follow the message-passing scheme of ECGNN but replace the sum aggregation with a max aggregation. Then, the update is given by
\begin{equation}\label{gnn:pcgnn2}
    \begin{aligned}
    &\bm{d}_k^{(t)} = \max_{j\neq k}\sigma(\bm{W}_1^{(t)}\bm{x}_k^{(t)} + \bm{W}_2 \bm{e}_{j,k} +  \bm{W}_3^{(t)}\bm{x}_j^{(t)}), \\
    &p_k = \text{MLP}(\bm{d}_k^{(T)}),
\end{aligned}
\end{equation}
where $\bm{W}_1^{(t)}, \bm{W}_2^{(t)}$, and $\bm{W}^{(t)}_3$ are learnable weights at the $t$-th layer.

\subsection{Power Control in Cell-free Massive MIMO}\label{arch:cf}
In this subsection, we improve the neural architecture of HetGNN, as introduced in Example \ref{exmp:hetgnn}, by simplifying the message passing scheme. Notice that the interference in \eqref{eq:sinr_cf} can be expressed as:
\begin{align*}
&\sum_{k'\neq k}^K p_{k'}\underbrace{\left( \left(\sum_{m=1}^M v_{m,k}\frac{u_{m,k'}}{u_{m,k}} \right)^2 |\bm{\phi}_k^H\bm{\phi}_{k'}|^2 + v_{m,k}u_{m,k'}\right)}_{\text{interference channel from user $k'$}} \\ &+ \sum_{m=1}^M v_{m,k}u_{m,k}p_k + \frac{1}{\rho} \sum_{m=1}^M v_{m,k},
\end{align*}
which is similar to the interference term in the D2D networks in \eqref{prob:sum_rate}. This motivates us to introduce the message passing between users as in Section \ref{arch:d2d}. Moreover, as there are neither node features nor optimization variables on the AP nodes, we run the message passing from the APs to the users only in the first iteration to simplify the message passing. For the aggregation function, we choose the mean aggregation as the power values are not sparse in the fairness problem in \eqref{prob:cf}. Specifically, in the first iteration, the messages are passed from the APs to users and the update is given by
\begin{align}\label{gnn:cf_pcgnn1}
    &\bm{d}_k^{(1)} = \frac{1}{M} \sum_{m=1}^M \sigma(\bm{w}_1^{(1)} u_{m,k}), \quad 1 \leq m \leq M, M < k \leq M + K,
\end{align}
where $\bm{w}_1$ is a learnable weight vector, and $\sigma(\cdot)$ is an activation function. For $t \geq 2$, the message passing scheme follows HetGNN but the messages are passed only among the users but not the APs. Specifically, the update is
\begin{align}\label{gnn:cf_pcgnn2}
    \begin{aligned}
    &\bm{a}_k^{(t)} = \sum_{m = M+1}^{M+K} \bm{W}_1^{(t)} \bm{d}_m^{(t-1)}, \quad M < k \leq M + K, \\
    &\bm{d}_k^{(t)} = \sigma(\bm{W}_2^{(t)} \bm{d}_k^{(t-1)} + \bm{a}_k^{(t)}), \quad M < k \leq M + K, \\
    &p_k = \text{MLP}(\bm{d}^{(T)}_k), \quad M < k \leq M + K,
\end{aligned}
\end{align}
where $\bm{W}_1^{(t)}$, $\bm{W}_2^{(t)}$ are learnable weights at the $t$-th layer, $\sigma(\cdot)$ is the activation function, and $\text{MLP}(\cdot)$ is a learnable MLP mapping the hidden state to the power values.

\subsection{Partially Connected Hybrid Precoding}\label{arch:hb}
In this subsection, we design the GNN architecture for hybrid precoding with the partially connected structure, i.e., Problem \eqref{prob:hybrid_partial} via deep unrolling. As discussed in \cite{yu2016alternating}, when $\bm{F}_{\rm{BB}}$ is fixed, we can obtain a close form solution of $\bm{F}_{\rm{RF}}$ as follows:
\begin{align}\label{upd:frf}
    \bm{F}_{\rm{RF}} = P_{\mathcal{X}}( \bm{F}_{\rm{opt}} \bm{F}^H_{\rm{BB}}),
\end{align}
where $P_{\mathcal{X}}$ is a projection onto the constraints in \eqref{eq:partial}. When $\bm{F}_{\rm{RF}}$ is fixed, $\bm{F}_{\rm{BB}}$ should be obtained via solving a semi-definite programming. Note that \eqref{upd:frf} can be interpreted as the following message passing from the symbol nodes to antenna nodes:
\begin{align*}
    &\bm{m}_{j \rightarrow i} = (\bm{F}_{\rm{opt}})_{(i,j)}(\bm{F}_{\rm{BB}})_{(:,j)}^H = {\bf A}_{(j,i+N_t)}^* (\bm{X}_{\rm{BB}})_{(j,:)}, \\
    &1\leq i \leq N_t, 1 \leq j \leq N_s\\
    &(\bm{F}_{\rm{RF}})_{(i,:)} = P_{\mathcal{X}} \left(\sum_{j=1}^{N_s} \bm{m}_{j \rightarrow i}\right), \quad 1\leq i \leq N_t.
\end{align*}
Motivated by this message-passing scheme, we design the following GNN:
\begin{equation}\label{gnn:unroll_gnn}
\small
    \begin{aligned}
    &\bm{a}_k^{(t)} = \left\{
\begin{aligned}
&\sum_{j=N_t+1}^{N_t + N_s} {\bf A}_{(j,k+N_t)}^* (\bm{d}^{(t-1)}_{j}) = \sum_{j = 1}^{N_s} (\bm{F}_{\rm{opt}})_{(k,j)} (\bm{d}^{(t-1)}_{j}), \\
&1 \leq k \leq N_t \\
&\sum_{j=1}^{N_t} {\bf A}_{(k,j+N_s)}^* (\bm{d}^{(t-1)}_{j})\bm{W} = \sum_{j=1}^{N_t} (\bm{F}_{\rm{opt}})_{(j,k-N_t)} (\bm{d}^{(t-1)}_{j})\bm{W}, \\ &N_t < k \leq N_t + N_s,
\end{aligned}
\right. \\
    &\bm{d}_k^{(t)} = \left\{
\begin{aligned}
&P_{\mathcal{X}} (\bm{a}_k^{(t)}), && 1 \leq k \leq N_t \\
& \sqrt{\frac{N_{RF}N_s}{N_t}} P_{\mathbb{S}}(\bm{a}_k^{(t-1)}), && N_t < k \leq N_t + N_s,
\end{aligned}
\right. 
\end{aligned}
\end{equation}
where $\bm{W} \in \mathbb{C}^{N_{RF} \times N_{RF}}$ is learnable weights at the $t$-th layer and $P_{\mathbb{S}}(\bm{x}) = \frac{\bm{x}}{\|\bm{x}\|_2}$ is a projection to the sphere.

\subsection{Benefits and Limitations of Applying GNNs}
In this subsection, we discuss the pros and cons of applying GNNs in wireless communications by comparing them with other methods.

\paragraph{GNNs versus classic algorithms} Compared with classic algorithms, GNNs have shown their superior performance in large-scale resource management \cite{shen2020graph}, multi-agent semantic communication \cite{zhou2022Multi}, and MIMO detection \cite{kosasih2022graph}. Furthermore, once a GNN is trained, it becomes a DMP algorithm (Theorem \ref{thm:GNNDMP}) and can be deployed in a distributed manner \cite{wang2022learning}. Nevertheless, the performance guarantee of GNNs (Lemma \ref{lem:gnn_mlp}) requires the training distribution to be the same as the test distribution while their robustness to channel distribution shift remains elusive in theory. Thus, classic algorithms may be better choices when the testing environment differs substantially from the training setting (e.g., ITU $\rightarrow$ LTE in Table \ref{tab:gen_channel}).

\paragraph{GNNs versus other neural architectures} Compared with other neural networks, the unique advantages of GNNs are good scalability and generalization, which are desirable in future wireless networks that are ultra-dense and dynamically changing. As discussed before, GNNs are the only learning-based methods that can scale up to a large number of users \cite{shen2020graph,kosasih2022graph} and antennas \cite{he2022gblinks,kosasih2022graph}. In addition, GNNs enjoyed unreasonably good generalization in wireless communication problems. For resource management problems, previous studies demonstrated that GNNs are robust to the variations in system parameters (e.g., network density \cite{eisen2019optimal}, SNR level \cite{jiang2020learning}, the number of antennas \cite{kim2022bipartite,he2022gblinks}, and the number of subcarriers \cite{he2022gblinks}), which cannot be achieved by classic neural architectures. For statistical inference problems, GNNs are robust to the change of SNR level, number of antennas, and the number of users \cite{kosasih2022graph}.

\paragraph{GNNs versus model-driven deep learning} GNNs can be used in a data-driven manner \cite{shen2020graph,jiang2020learning,guo2021learning} or integrated with model-driven methods \cite{eisen2019optimal,chowdhury2021unfolding,he2022graph}. As the definition of model-driven deep learning is ambiguous, a thorough study goes beyond the scope of this paper. For simplicity, we only investigate the comparison between model-driven GNNs and data-driven ones in one common task, i.e., beamforming. Specifically, we consider the problem where the input $Y$ is a (random) function of the instantaneous CSI matrix $X$ and the output is the beamforming matrix. To use unrolled-based methods \cite{hu2020iterative,chowdhury2021unfolding}, we should first reconstruct $X$ and then feed it into the neural networks. When the mutual information $I(X,Y)$ is much smaller than entropy $H(X)$, it is difficult to reconstruct $X$ from $Y$ and data-driven methods are superior. This is the case when $Y$ is limited channel feedback \cite{sohrabi2021deep} or pilots \cite{jiang2020learning}. Otherwise, a proper combination of model-driven deep learning and GNNs can always improve the performance as it reduces $C$ in bound \eqref{bound:C}. Furthermore, data-driven methods require careful data normalization while model-driven ones do not.


\section{Simulations}\label{sec:exp}
In this section, we conduct simulations to verify our theoretical results in Section \ref{sec:aa} and the effectiveness of the design methodologies developed in Section \ref{sec:arch}.
\subsection{Power Control in D2D Networks}
\begin{figure*}[htbp]
    \centering
    \subfigure[Performance versus training samples, SNR = 10dB.]
    {
        \includegraphics[width=0.9\columnwidth]{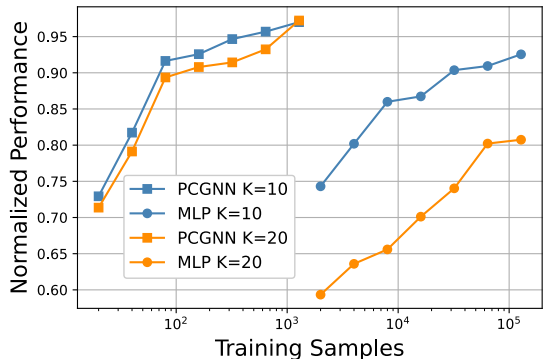}
        \label{fig:sample}
    }
    \subfigure[Performance versus transceiver pairs, SNR = 10dB.]
    {
        \includegraphics[width=0.9\columnwidth]{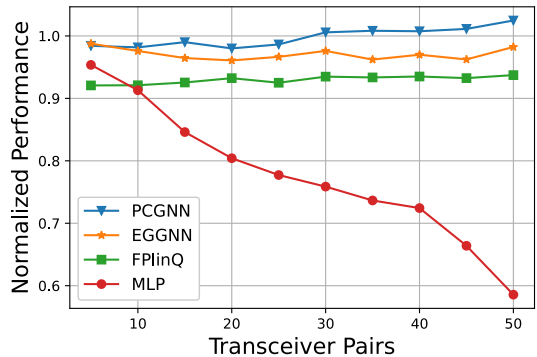}
        \label{fig:per_d2d}
    }
    \caption{Performance of different benchmarks for power control in D2D networks. The normalized performance refers to the sum rate of different methods normalized by the sum rate of \emph{Best FPlinQ}.}
\end{figure*}
We first investigate the sum rate maximization problem (i.e., Problem \eqref{prob:sum_rate}) in a single-antenna Gaussian interference channel, i.e., $h_{j,k} \sim \mathcal{CN}(0,1)$ and $w_k = 1, \forall k$, the same setting as \cite{sun2018learning,liang2018towards}. We consider the following benchmarks for comparison. 
\begin{enumerate}
    \item \textbf{FPlinQ} \cite{shen2017fplinq}: FPLinQ is a state-of-the-art optimization-based algorithm for power control in D2D networks. 
    \item \textbf{MLP} \cite{sun2018learning}: This is a classic learning-based method for power control in D2D networks and we follow the setting of \cite{sun2018learning}. 
    \item \textbf{ECGNN} \cite{shen2019graph}: ECGNN is a GNN-based method for power control as is shown in \eqref{gnn:ecgnn} and we set the number of layers as $2$. 
    \item \textbf{PCGNN}: PCGNN is the proposed GNN in \eqref{gnn:pcgnn1} and \eqref{gnn:pcgnn2} and the number of layer is $2$. 
    \item \textbf{UWMMSE} \cite{chowdhury2021unfolding}: This method unfolds the WMMSE algorithm and the learnable parameters are parameterized with GNNs. The GNN's width and depth, and the number of training samples are chosen as suggested in \cite{chowdhury2021unfolding}.
    \item \textbf{Best FPlinQ}: For each test sample, we run FPlinQ with $100$ different initialization points and use the best one as the final performance. The sum rate achieved by this method is often regarded as the highest achievable one \cite{hu2020iterative,shen2020graph,chowdhury2021unfolding}. 
\end{enumerate}

We first compare the sample complexity of MLPs and GNNs when the number of users is $10$ or $20$ and SNR = 10dB, as shown in Fig. \ref{fig:sample}. For $K=10$, PCGNN trained with $20$ and $40$ training samples achieves similar performance with the MLP trained with $2,000$ and $4,000$ training samples, respectively. Remarkably, for $K=20$, PCGNN trained with $20$ and $40$ training samples achieves similar performance with MLP trained with $16,000$ and $64,000$ training samples, respectively. This verifies that GNNs need lower sample complexity than MLPs approximately in the order of $\mathcal{O}(K^2)$. We also test the performance of different methods versus the number of transceiver pairs. The results are shown in Fig. \ref{fig:per_d2d}, where SNR = $10$dB and we use $100,000$ training samples for learning-based methods. It is observed that the performance of PCGNN and ECGNN is stable while that of MLPs decreases linearly with the network size. This verifies the statement in Theorem \ref{thm:gen_gnn_mlp} that GNNs achieve a better performance than MLPs approximately in the order of $\mathcal{O}(K)$. In addition, the proposed PCGNN outperforms ECGNN consistently, which demonstrates the effectiveness of our design guidelines. 

Surprisingly, PCGNN even outperforms Best FPlinQ when the number of users is large. This is because the number of local minimizers grows \emph{exponentially} with the user number and optimization-based methods tend to be trapped in bad local minimizers. Intuitively, the neural networks first implicitly model the optimization landscape during training and generate solutions according to the landscape in the test. From this perspective, neural networks share a similar spirit as Gibbs algorithms \cite{qian2012distributed}, which are DMP algorithms and robust to find a better local minimizer. Nevertheless, neural networks are much more powerful in large-scale non-convex optimization as they can efficiently model distributions.

In the following, we test the robustness of PCGNN. We consider $10$ single-antenna transceiver pairs within a $250 \times 250\text{m}^2$ area and the channel model is ITU-1411 \cite{itu1411}. The transmitters are randomly located in this area while each receiver is uniformly distributed within $[d_{\text{min}},d_{\text{max}}]$ from the corresponding transmitter.
\paragraph*{Robustness to SNR level} We set $d_{\text{min}} = 10\text{m}$ and $d_{\text{max}} = 50\text{m}$. In the training dataset, the transmit power is uniformly distributed in $10-30\text{dBm}$ and we use transmit power as the node feature \cite{liang2018towards}. We test the performance of GNNs across different transmit power values, which is shown in Table \ref{tab:gen_snr}. The performance is normalized by Best FPlinQ and ``PCGNN (Full training)'' refers to the PCGNN whose training transmit power is the same as the test transmit power. The simulations demonstrate that PCGNN's performance is stable in terms of the SNR level.

\begin{table}[htb]
	
	\selectfont  
	\centering
		\newcommand{\tabincell}[2]{\begin{tabular}{@{}#1@{}}#2\end{tabular}}
	\caption{Robustness to SNR level. $K=10$.} 
	\resizebox{0.42\textwidth}{!}{
		\begin{tabular}{|c|c|c|c|c|c|c|}  
			\hline  
			\tabincell{c}{Transmit \\Power (dBm)}&         PCGNN       &     \tabincell{c}{PCGNN \\(Full Training)}     &      \tabincell{c}{FPlinQ}    \\ \hline
			  0        &      $96.12\%$      &       $97.10\%$       &      $93.43\%$           \\ \hline
			10  &      $96.80\%$      &        $97.20\%$       &      $93.75\%$          \\\hline
			  20  &      $97.00\%$      &       $97.39\%$      &      $93.62\%$        \\\hline
                30  &      $96.65\%$      &       $97.12\%$      &      $93.31\%$       \\\hline
                40  &      $96.90\%$      &       $97.13\%$      &      $93.18\%$        \\\hline

	\end{tabular}}
	\label{tab:gen_snr}
\end{table}

\paragraph*{Robustness to channel distribution shifts} As empirically verified in \cite{shen2020graph,eisen2019optimal,jiang2020learning}, the performance of GNN-based methods is robust when the channels change. We further test four extreme situations here: a) \textbf{User distribution shift}: We set $d_{\text{max}} = d_{\text{min}} = 30\text{m}$ in the training while $d_{\text{max}} = 50\text{m}$ and $d_{\text{min}} = 10\text{m}$ in the test; b) \textbf{Antenna height distribution shift}: In the training, the height of both transmit antenna and receiver antenna is $1.5\text{m}$. In the test, the height of transmit antenna is uniformly distributed in $30-50\text{m}$ and the receiver antenna is uniformly distributed in $1-3\text{m}$; c) \textbf{LoS $\rightarrow$ NLoS}: There are only line-of-sight paths in the training while both line-of-sight and non-line-of-sight paths exist in the test; d) \textbf{ITU $\rightarrow$ LTE}: The channel models in the training is ITU-1411 \cite{itu1411} while we follow channel model in the standard LTE cellular network \cite{insoo13holistic} in the test. The simulations are shown in Table \ref{tab:gen_channel}. The performance is normalized by Best FPlinQ and ``PCGNN (Full training)'' refers to the PCGNN whose training channel distribution is the same as the test ones. Interestingly, we see that the performance loss is subtle in the first three tasks even if the distribution shift is significant, which demonstrates the robustness of GNNs when the channel changes. The performance drop in the fourth task is because the training data and test data have different supports, which shows that retraining is needed when there are significant variations in channel models.

\begin{table}[htb]
	
	\selectfont  
	\centering
	\newcommand{\tabincell}[2]{\begin{tabular}{@{}#1@{}}#2\end{tabular}}
	\caption{Robustness to the change of channels. $K=10$.} 
	
	\resizebox{0.48\textwidth}{!}{
		\begin{tabular}{|c|c|c|c|c|c|}  
			\hline  
			Setting &     PCGNN      &    \tabincell{c}{PCGNN \\(Full Training)}      &       \tabincell{c}{FPlinQ}              \\ \hline
			User distribution shift &      $97.62\%$     &      $97.78\%$      &             $93.51\%$              \\ \hline
			\tabincell{c}{Antenna height\\ distribution shift} &       $97.63\%$     &      $97.63\%$      &       $93.66\%$         \\ \hline
			LoS $\rightarrow$ NLoS     &      $94.43\%$      &      $97.53\%$      &      $93.51\%$           \\\hline
			ITU $\rightarrow$ LTE  &      $88.71\%$      &       $96.57\%$      &       $94.32\%$           \\\hline
			
	\end{tabular}}
	\label{tab:gen_channel}
\end{table}

\begin{table*}[htb]
	
	\selectfont  
	\centering
	
	\caption{Average minimum rate achieved by different benchmarks divided by the average minimum rate achieved by the \emph{optimal algorithm}.} 
	\newcommand{\tabincell}[2]{\begin{tabular}{@{}#1@{}}#2\end{tabular}}
	\resizebox{0.75\textwidth}{!}{
		\begin{tabular}{|c|c|c|c|c|c|c|c|c|}
			\hline
			$M$& $K$  & Maximum Power & MLP  & \tabincell{c}{HetGNN \\$K_{\text{train}} = 6$} & \tabincell{c}{CF-PCGNN \\$K_{\text{train}} = 6$} &    HetGNN & CF-PCGNN  \cr \hline
			\multirow{3}{*}{30} & 6  &$46.3\%$  & $84.4\%$ & $94.9\%$ & $95.1\%$ & $94.9\%$ & $95.1\%$  \cr \cline{2-8} 
			& 8 & $35.8\%$ & $74.8\%$  &$90.2\%$  & $92.1\%$ & $91.9\%$ & $93.1\%$ \cr \cline{2-8} 
			& 10 & $29.4\%$ & $70.5\%$  & $85.7\%$ & $90.3\%$ & $90.8\%$  & $91.4\%$  \cr \hline
			\multirow{3}{*}{50} & 6 & $63.7\%$ & $90.4\%$ & $96.2\%$ & $96.8\%$ & $96.2\%$ & $96.8\%$  \cr \cline{2-8} 
			& 8 & $55.5\%$ & $84.2\%$ & $94.4\%$ & $95.2\%$ & $95.6\%$ & $95.6\%$\cr \cline{2-8} 
			& 10 & $46.8\%$  & $78.7\%$ & $93.1\%$ & $94.5\%$ & $94.7\%$& $95.1\%$\cr \hline
	\end{tabular}}
	\label{tab:cf}
\end{table*}

\begin{figure*}[htbp]
    \centering
    \subfigure[The energy efficiency achieved by different benchmarks.]
    {
        \includegraphics[width=0.8\columnwidth]{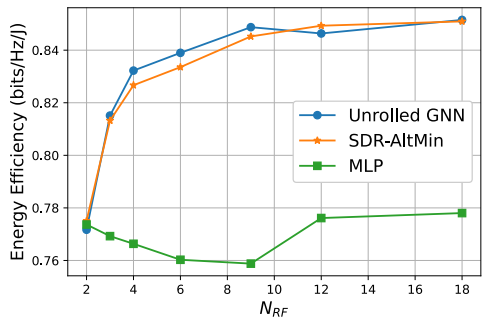}
        \label{fig:hb_ee}
    }
    \subfigure[The computation time of different benchmarks.]
    {
        \includegraphics[width=0.8\columnwidth]{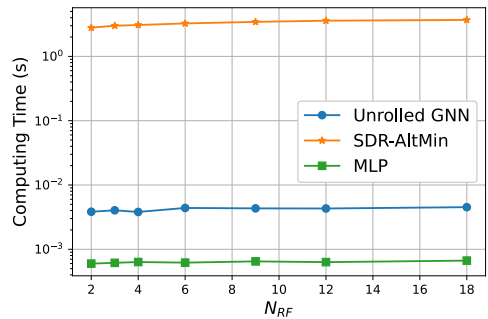}
        \label{fig:hb_ct}
    }
    \caption{Performance and computation time in hybrid precoding. }
    \label{fig:cf_per}
\end{figure*}

\subsection{Power Control in Cell-free Massive MIMO}
In this subsection, we consider the power control problem formulated in \eqref{prob:cf} in a cell-free network in an area of $0.5 \times 0.5 \text{km}^2$ with different numbers of APs $M$ and users $K$. Other system configurations are identical to \cite{ngo2017cell,rajapaksha2021deep}. We consider the following benchmarks for comparison. 
\begin{enumerate}
    \item \textbf{Maximum Power}: All users transmit with full power.
    \item \textbf{MLP} \cite{rajapaksha2021deep}: It is developed in \cite{rajapaksha2021deep} for the uplink power control in cell-free massive MIMO. 
    \item \textbf{HetGNN}: HetGNN is a GNN architecture designed for the bipartite graph. We modify it for the power control problem in cell-free massive MIMO according to the guidelines in \cite{guo2021learning}. The updates of HetGNN are shown in \eqref{gnn:hetgnn} and the number of layers is set as $3$.
    \item \textbf{CF-PCGNN} \cite{shen2019graph}: CF-PCGNN is the proposed GNN in \eqref{gnn:cf_pcgnn1} and \eqref{gnn:cf_pcgnn2} and the number of layers is set as $2$.
    \item \textbf{Optimal Algorithm}: This algorithm knows all the needed information, i.e., $v_{m,k}$, $u_{m,k}$, $\rho$, and $\bm{\phi}_k$ in \eqref{eq:sinr_cf}. Then an optimal bisection algorithm is adopted to solve Problem \eqref{prob:cf}. It will be used as the performance upper bound. 
\end{enumerate}
For the learning-based methods, the input is the large-scale fading coefficient $v_{m,k}$. We consider $M = 30,50$ and $K = 6,8,10$. For each setting, we generate $10,000$ samples for training and $2,000$ samples for testing. The performance of different benchmarks is shown in Table \ref{tab:cf}. In the table, we also test the ability of GNNs to generalize to different settings. Specifically, we train HetGNN and CF-PCGNN on a network with $6$ users and test them on networks with $6-10$ users, which we refer to as ``HetGNN $K_{\text{train}} = 6$'' and ``CF-PCGNN $K_{\text{train}} = 6$''. As shown in the table, the performance of GNNs is superior to that of MLPs, and the performance gain becomes larger with a larger number of users. In addition, the proposed GNN outperforms HetGNN, which demonstrates the effectiveness of our neural architecture design guidelines. In this task, we notice that the performance of GNNs degrades when the number of users grows. This is because the objective function is not differentiable and training becomes unstable when there are more variables. This can be alleviated by adopting a smooth envelope of the loss function, but it is beyond the scope of this paper.

\subsection{Partially Connected Hybrid Precoding}
This subsection compares different methods for the hybrid precoding problem with a partially connected structure. The data streams are sent from a transmitter with $N_t = 144$ antennas to a receiver with $N_r = 36$ antennas, and SNR = 10dB. The other system parameters are identical with \cite{yu2016alternating}. We consider the following benchmarks: 
\begin{enumerate}
    \item \textbf{MLP}: An MLP is adopted to learn the map from $\bm{F}_{\rm{opt}}$ to the optimal $\bm{F}_{\rm{RF}}$ and then $\bm{F}_{\text{BB}}$ is obtained by \eqref{upd:frf}.
    \item \textbf{Unrolled GNN}: It is the proposed GNN in \eqref{gnn:unroll_gnn} and we unfold $10$ iterations. 
    \item \textbf{SDR-AltMin}: SDR-AltMin is an optimization-based method for partially connected hybrid precoding developed in \cite{yu2016alternating}.
\end{enumerate}
We use energy efficiency as the metric to evaluate the performance of different benchmarks, which is defined as the ratio between the spectral efficiency and total power consumption:
\begin{align*}
    \eta = \frac{R}{P_{\rm{common}} + N_{\rm{RF}}P_{\rm{RF}} + N_t(N_{\rm{PA}} + N_{\rm{PS}})}
\end{align*}
where $R$ is the spectral efficiency, $P_{\text{common}} = 10\rm{W}$, $P_{\rm{RF}} = 100\rm{mW}$, $P_{\rm{PS}} = 10\rm{mW}$, and $P_{\rm{PA}} = 100\rm{mW}$ \cite{yu2016alternating}. We generate $10,000$ samples for training and $2,000$ samples for testing.

The performance results are shown in Fig. \ref{fig:hb_ee}. A unique property of the partially connected structure is that the energy efficiency increases with the number of RF chains \cite{yu2016alternating}. This trend can be revealed via SDR-AltMin or GNN-based approaches but not MLP-based approaches. In addition, the performance of unrolled GNN significantly outperforms that of MLPs. We also compare the computation time between the learning-based methods and optimization-based methods. For a fair comparison, we test all the methods on Intel(R) Xeon(R) CPU @ 2.20GHz and set the test batch size as $1$ for the learning-based methods. The results are shown in Fig. \ref{fig:hb_ct} and we see that the learning-based methods achieve hundreds of times speedups compared with the optimization-based SDR-AltMin.

\section{Conclusions}\label{sec:con}
In this paper, we developed a powerful framework for applying GNNs in wireless networks, consisting of graph modeling, neural architecture design, and theoretical analysis. In contrast to existing learning-based methods, we focused on the theoretically principled approaches to meet the key performance requirements, including scalability, good generalization, and high computational efficiency. Moreover, this paper provided the first quantitative result regarding the performance gain of GNNs over classic neural architectures in wireless communication systems, which casts light on the performance analysis and principled neural architecture design of deep learning-based approaches. As for future directions, it will be interesting to apply our framework to more design problems in wireless communications. It will also be interesting to combine GNNs with model-driven deep learning approaches and extend our theoretical results to this setting. 

\appendices
\section{Proof of Theorem \ref{fact:dmp}}\label{app:dmp}
The notations of this section follow \cite{shen2020graph} and we will not introduce them due to space limitation. We prove Theorem \ref{fact:dmp} by showing that the map from the problem parameter to the optimal solution, referred as the \emph{optimal map} $F:\bm{Z}, {\bf A} \mapsto \bm{X}^*$, can be expressed as a DMP algorithm. First, by Proposition 4 of \cite{shen2020graph}, we know that the optimal map is permutation equivariant, i.e., $F(\pi \star \bm{Z}, \pi \star {\bf A}) = \pi \star F(\bm{Z}, {\bf A})$. Let $F = [f_1, \cdots, f_n]$, where $f_i: \bm{Z}, {\bf A} \mapsto (\bm{X}^*)_{(i,:)}$. From a node's view, the optimal map from the problem parameter to its optimal solution is permutation invariant, i.e., $f_{\pi(i)}(\pi \star \bm{Z}, \pi \star {\bf A}) = f_{i} (\pi \star \bm{Z}, \pi \star {\bf A})$. From Theorem 7 of \cite{zaheer2017deep}, any permutation invariant function $f(x_1, \cdots, x_n)$ can be decomposed as $f(x_1,\cdots, x_n) = \psi\left(\sum_{i=1}^{n} \phi(x_i)\right)$, where $\psi:\mathbb{R}^{n+1} \rightarrow \mathbb{R}$, $\phi: \mathbb{R} \rightarrow \mathbb{R}^{n+1}$. This decomposition can be implemented with a DMP algorithm where $\bm{m}_{j \rightarrow i} = \phi(\bm{x}_j^{(t-1)}),  \bm{y}_i^{(t)} = \sum_{j} \bm{m}_{j\rightarrow i}^{(t)}, \bm{x}_i^{(t)} = \bm{y}_i^{(t)}$. Thus, for any graph optimization problem in \eqref{eq:cg_opt}, there exists a DMP algorithm that can solve it.

\section{Proof of Lemma \ref{lem:gnn_mlp}}\label{app:lem_gnn_mlp}
We first check the second condition of AA for the GNN in \eqref{gnn:wcgcn}. The basic module of this GNN is shown in \eqref{eq:align} and the training algorithm $\mathcal{A}$ is gradient descent. Thus, a sample complexity bound on MLP trained via gradient descent is needed. Following Theorem 5.1 of \cite{arora2019fine} or Lemma \ref{lem:mlp_gen}, we can ensure the existence of $C_{\mathcal{A}}(h^{(t)},\epsilon, \delta), C_{\mathcal{A}}(g_2^{(t)} \circ g_1^{(t)},\epsilon, \delta) < \infty$, where $\mathcal{A}$ is the gradient descent algorithm. The second condition is satisfied if we set $N > 2T \cdot \frac{C}{|V|}$,and $C = \max_{t = 1, \cdots, T} (C_{\mathcal{A}}(h^{(t)},\epsilon, \delta), C_{\mathcal{A}}(g_1^{(t)} \circ g_2^{(t)},\epsilon, \delta))$. The $|V|$ term in the denominator is because each training sample is reused $|V|$ times as each node shares an identical MLP. 

To use AA for MLPs, we define the module functions $f_1, \cdots, f_{T|V|}$ and neural network modules $\mathcal{M}_1, \cdots, \mathcal{M}_{T|V|}$ as $f_i = g_2^{(\lfloor i/|V| \rfloor)} \circ g_1^{(\lfloor i/|V| \rfloor)}  \circ h^{(\lfloor i/|V| \rfloor)}, \quad \mathcal{M}_i = \text{MLP}_i$, where the definition of $\mathcal{M}_i$ is because part of an MLP is also an MLP. With these definitions, the first condition of AA holds, and the second condition is satisfied if we set $N > |V|T \cdot C , C = \max_{i=1,\cdots,T} C_{\mathcal{A}}(g_2^{(t)} \circ g_1^{(t)}  \circ h^{(t)}, \epsilon, \delta)$.

\section{Proof for Theorem \ref{thm:gen_gnn_mlp}}\label{app:gen_gnn_mlp}
We first present a useful lemma. Denoting the input as $\bm{x} \in \mathbb{R}^d$, we define deep function class recursively as follows:
\begin{equation}\label{eq:dnn}
    \begin{aligned}
    &\mathcal{F}^{(1)} = \{\bm{W}^T\bm{x}, \bm{W} \in \mathcal{W}\}, \\
    &\mathcal{F}^{(k)} = \mathcal{F}_{A_k, L_1}(h \circ \mathcal{F}^{(k-1)}) \\
    &= \left\{ \sum_{j=1}^m w_j h(f_j(x)): \|\bm{w}\|_1 \leq A_k, f_j \in \mathcal{F}^{(k-1)}\right\}.
\end{aligned}
\end{equation}

We have the following generalization bound for MLPs:
\begin{lem}\label{lem:mlp_gen} (Theorem 13 in \cite{koltchinskii2002empirical}) Consider $K$-layer MLPs defined by \eqref{eq:dnn}. Assume the output of neural networks is bounded by $M$ and $h$ is Lipschitz. Then there exists a constant $C$ for all distribution $\mathcal{D}$ such that the generalization error $\epsilon_n(\delta)$ is bounded by
\begin{align*}
    CM \left[ A \sqrt{\frac{d+1}{n}} + \sqrt{\frac{\log(1/\delta)}{n}} \right]= \mathcal{O}(1/\sqrt{n})
\end{align*}
where $n$ is the number of training samples and $A = 2^{K-2} \Pi_{l=2}^k A_l$.
\end{lem}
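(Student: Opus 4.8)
The plan is to prove the bound through the standard route of first controlling the generalization gap by the empirical Rademacher complexity of the class $\mathcal{F}^{(K)}$, and then propagating that complexity across layers by a recursion dictated by the definition \eqref{eq:dnn}. Since the cited statement is a known result (Theorem 13 of \cite{koltchinskii2002empirical}), the intended ``proof'' in the paper is simply to invoke it; below I sketch the self-contained Rademacher-complexity argument that underlies it. The first step is the usual symmetrization plus bounded-difference concentration: because the network output is bounded by $M$, McDiarmid's inequality together with the symmetrization lemma yields, with probability at least $1-\delta$,
\begin{align*}
\epsilon_n(\delta) \leq 2\,\mathcal{R}_n(\mathcal{F}^{(K)}) + C M \sqrt{\frac{\log(1/\delta)}{n}},
\end{align*}
where $\mathcal{R}_n$ denotes the empirical Rademacher complexity. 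The second term already supplies the $\sqrt{\log(1/\delta)/n}$ contribution, so the remaining task is to show $\mathcal{R}_n(\mathcal{F}^{(K)}) = \mathcal{O}\!\left(A\sqrt{(d+1)/n}\right)$.

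Next I would establish the per-layer recursion. Each layer composes the Lipschitz activation $h$ with the previous class and then forms an $\ell_1$-bounded linear combination, so two elementary facts move the complexity from $\mathcal{F}^{(k-1)}$ to $\mathcal{F}^{(k)}$. The first is the $\ell_1$-combination identity: the $\ell_1$-ball of radius $A_k$ is (up to sign) the scaled symmetric convex hull of its inner functions, and since Rademacher complexity is invariant under passing to convex hulls, one obtains $\mathcal{R}_n(\mathcal{F}_{A_k,L_1}(\mathcal{G})) = A_k\,\mathcal{R}_n(\mathcal{G})$. The second is the Ledoux--Talagrand contraction inequality applied to $h$, giving $\mathcal{R}_n(h\circ\mathcal{G}) \leq 2L\,\mathcal{R}_n(\mathcal{G})$ with $L$ the Lipschitz constant. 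Chaining these over one layer yields $\mathcal{R}_n(\mathcal{F}^{(k)}) \leq 2L A_k\,\mathcal{R}_n(\mathcal{F}^{(k-1)})$.

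With the recursion in hand I would unroll it down to the base class $\mathcal{F}^{(1)} = \{\bm{W}^T\bm{x}\}$. A direct computation bounds the Rademacher complexity of this linear class, including the bias coordinate, by a term of order $\sqrt{(d+1)/n}$, via Cauchy--Schwarz and $\mathbb{E}\|\sum_i \varepsilon_i \bm{x}_i\| \leq \sqrt{\sum_i \|\bm{x}_i\|^2}$. Multiplying the per-layer factors for $k=2,\dots,K$ and absorbing the fixed Lipschitz constant into $C$ (normalising $L=1$) produces the constant $A = 2^{K-2}\prod_{l=2}^{K}A_l$, so that $\mathcal{R}_n(\mathcal{F}^{(K)}) \leq A\sqrt{(d+1)/n}$. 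Substituting this into the first display and collecting the universal constants into $C$ gives the claimed bound $CM\!\left[A\sqrt{(d+1)/n} + \sqrt{\log(1/\delta)/n}\right]$, which is $\mathcal{O}(1/\sqrt{n})$ for a fixed architecture.

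The main obstacle I expect is the bookkeeping of constants through the recursion so that the exponent in $2^{K-2}$ emerges exactly, rather than $2^{K-1}$ or a spurious power of $L$. This hinges on invoking the correct form of the contraction inequality at each layer and on not double-counting the factor at the input and output layers, which is precisely why the product of per-layer factors runs over $l=2,\dots,K$ while the base layer contributes only the $\sqrt{(d+1)/n}$ term. A secondary subtlety is justifying the convex-hull invariance used in the $\ell_1$-combination step, which is standard but must be handled with care because the inner functions $f_j$ range over an infinite class rather than a fixed finite dictionary.
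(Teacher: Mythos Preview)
Your proposal is correct: the paper does not prove this lemma at all but simply quotes it as Theorem~13 of \cite{koltchinskii2002empirical}, exactly as you anticipated in your opening remark. The Rademacher-complexity sketch you supply (symmetrization plus McDiarmid for the $\sqrt{\log(1/\delta)/n}$ term, then the per-layer contraction and convex-hull recursion unrolled to the linear base class) is the standard argument behind that cited result and is sound; the $2^{K-2}$ versus $2^{K-1}$ bookkeeping you flag is real but immaterial here, since the paper only consumes the $\mathcal{O}(1/\sqrt{n})$ rate in the proof of Theorem~\ref{thm:gen_gnn_mlp}.
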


The first conclusion of Theorem \ref{thm:gen_gnn_mlp} directly follows Lemma \ref{lem:gnn_mlp} and Theorem \ref{thm:aa}. To obtain the second conclusion of Theorem \ref{thm:gen_gnn_mlp}, a generalization bound of the neural network is further required to transform the sample complexity into a generalization error. As shown in Lemma \ref{lem:mlp_gen}, the generalization error is inversely proportional to the square root of the sample complexity. As MLP's sample complexity is $|V|^2$ times larger than that of GNN in \eqref{gnn:wcgcn}, the generalization error of MLP is $|V|$ times larger than the GNN.

\bibliographystyle{ieeetr}
\bibliography{ref}

\begin{thebibliography}{10}

\bibitem{shen2021neural}
Y.~Shen, J.~Zhang, and K.~B. Letaief, ``How neural architectures affect deep
  learning for communication networks?,'' in {\em Proc. IEEE Int. Conf.
  Commun.}, pp.~1--6, Seoul, South Korea, May 2022.

\bibitem{sun2018learning}
H.~Sun, X.~Chen, Q.~Shi, M.~Hong, X.~Fu, and N.~D. Sidiropoulos, ``Learning to
  optimize: Training deep neural networks for interference management,'' {\em
  IEEE Trans. Signal Process.}, vol.~66, pp.~5438 -- 5453, Oct. 2018.

\bibitem{shen2020graph}
Y.~Shen, Y.~Shi, J.~Zhang, and K.~B. Letaief, ``Graph neural networks for
  scalable radio resource management: Architecture design and theoretical
  analysis,'' {\em IEEE J. Sel. Areas Commun.}, vol.~39, pp.~101--115, Jan.
  2021.

\bibitem{he2020model}
H.~He, C.-K. Wen, S.~Jin, and G.~Y. Li, ``Model-driven deep learning for {MIMO}
  detection,'' {\em IEEE Trans. Signal Process.}, vol.~68, pp.~1702--1715, Feb.
  2020.

\bibitem{shao2021learning}
J.~Shao, Y.~Mao, and J.~Zhang, ``Learning task-oriented communication for edge
  inference: An information bottleneck approach,'' {\em IEEE J. Sel. Areas
  Commun.}, vol.~40, pp.~197--211, Jan. 2022.

\bibitem{liang2018towards}
F.~Liang, C.~Shen, W.~Yu, and F.~Wu, ``Towards optimal power control via
  ensembling deep neural networks,'' {\em IEEE Trans. Commun.}, vol.~68,
  pp.~1760--1776, Mar. 2020.

\bibitem{he2019model}
H.~He, S.~Jin, C.-K. Wen, F.~Gao, G.~Y. Li, and Z.~Xu, ``Model-driven deep
  learning for physical layer communications,'' {\em IEEE Wireless Commun.},
  vol.~26, pp.~77--83, Oct. 2019.

\bibitem{shen2018lora}
Y.~Shen, Y.~Shi, J.~Zhang, and K.~B. Letaief, ``{LORM}: {Learning} to optimize
  for resource management in wireless networks with few training samples,''
  {\em IEEE Trans. Wireless Commun.}, vol.~19, pp.~665--679, Jan. 2020.

\bibitem{ma2021neural}
Y.~Ma, Y.~Shen, X.~Yu, J.~Zhang, S.~Song, and K.~B. Letaief, ``Neural
  calibration for scalable beamforming in {FDD} massive {MIMO} with implicit
  channel estimation,'' in {\em Proc. IEEE Global Commun. Conf.}, pp.~1--6,
  Madrid, Spain, Dec. 2021.

\bibitem{letaief2019roadmap}
K.~B. Letaief, W.~Chen, Y.~Shi, J.~Zhang, and Y.-J.~A. Zhang, ``The roadmap to
  {6G--AI} empowered wireless networks,'' {\em IEEE Commun. Mag.}, vol.~57,
  pp.~84--90, Aug. 2019.

\bibitem{letaief2022Edge}
K.~B. Letaief, Y.~Shi, J.~Lu, and J.~Lu, ``Edge artificial intelligence for
  {6G}: Vision, enabling technologies, and applications,'' {\em IEEE J. Sel.
  Areas Commun.}, vol.~40, pp.~5--36, Jan. 2022.

\bibitem{lee2019graph}
M.~Lee, G.~Yu, and G.~Y. Li, ``Graph embedding based wireless link scheduling
  with few training samples,'' {\em IEEE Trans. Wireless Commun.}, vol.~20,
  pp.~2282--2294, Apr. 2021.

\bibitem{eisen2019optimal}
M.~Eisen and A.~Ribeiro, ``Optimal wireless resource allocation with random
  edge graph neural networks,'' {\em IEEE Trans. Signal Process.}, vol.~68,
  pp.~2977--2991, Apr. 2020.

\bibitem{jiang2020learning}
T.~Jiang, H.~V. Cheng, and W.~Yu, ``Learning to beamform for intelligent
  reflecting surface with implicit channel estimate,'' {\em IEEE J. Sel. Areas
  Commun.}, vol.~39, pp.~1931--1945, Jul. 2021.

\bibitem{kosasih2022graph}
A.~Kosasih, V.~Onasis, V.~Miloslavskaya, W.~Hardjawana, V.~Andrean, and
  B.~Vucetic, ``Graph neural network aided {MU-MIMO} detectors,'' {\em IEEE J.
  Sel. Areas Commun.}, vol.~40, pp.~2540--2555, Sep. 2022.

\bibitem{chowdhury2021unfolding}
A.~Chowdhury, G.~Verma, C.~Rao, A.~Swami, and S.~Segarra, ``Unfolding {WMMSE}
  using graph neural networks for efficient power allocation,'' {\em IEEE
  Trans. Wireless Commun.}, vol.~20, pp.~6004--6017, Sep. 2021.

\bibitem{lee2021learning}
H.~Lee, S.~H. Lee, and T.~Q. Quek, ``Learning autonomy in management of
  wireless random networks,'' {\em IEEE Trans. Wireless Commun.}, vol.~20,
  pp.~8039--8053, Dec. 2021.

\bibitem{he2022graph}
H.~He, A.~Kosasihy, X.~Yu, J.~Zhang, S.~Song, W.~Hardjawanay, and K.~B.
  Letaief, ``Graph neural network enhanced approximate message passing for
  {MIMO} detection,'' {\em arXiv:2205.10620.}, 2022. [Online]. Available:
  https://arxiv.org/abs/2205.10620.

\bibitem{he2021Overview}
S.~He, S.~Xiong, Y.~Ou, J.~Zhang, J.~Wang, Y.~Huang, and Y.~Zhang, ``An
  overview on the application of graph neural networks in wireless networks,''
  {\em IEEE Open J. the Commun. Society}, vol.~2, pp.~2547--2565, Nov. 2021.

\bibitem{zhou2022Multi}
Y.~Zhou, J.~Xiao, Y.~Zhou, and G.~Loianno, ``Multi-robot collaborative
  perception with graph neural networks,'' {\em IEEE Robotics Automation
  Lett.}, vol.~7, pp.~2289--2296, Apr. 2022.

\bibitem{guo2021learning}
J.~Guo and C.~Yang, ``Learning power allocation for multi-cell-multi-user
  systems with heterogeneous graph neural network,'' {\em IEEE Trans. Wireless
  Commun.}, vol.~21, pp.~884--897, Feb. 2022.

\bibitem{valiant1984theory}
L.~G. Valiant, ``A theory of the learnable,'' {\em Commun. ACM}, vol.~27,
  pp.~1134--1142, Nov. 1984.

\bibitem{shen2019graph}
Y.~Shen, Y.~Shi, J.~Zhang, and K.~B. Letaief, ``A graph neural network approach
  for scalable wireless power control,'' in {\em Proc. IEEE Global Commun.
  Conf. Workshops}, pp.~1--6, Waikoloa, HI, USA, Dec. 2019.

\bibitem{angluin1980local}
D.~Angluin, ``Local and global properties in networks of processors,'' in {\em
  Proc. ACM Symp. Theory Comput.}, pp.~82--93, Los Angeles, CA, USA, Apr. 1980.

\bibitem{Shi2011An}
Q.~Shi, M.~Razaviyayn, Z.~Luo, and C.~He, ``An iteratively weighted {MMSE}
  approach to distributed sum-utility maximization for a {MIMO} interfering
  broadcast channel,'' {\em IEEE Trans. Signal Process.}, vol.~59,
  pp.~4331--4340, Sept. 2011.

\bibitem{yu2016alternating}
X.~Yu, J.-C. Shen, J.~Zhang, and K.~B. Letaief, ``Alternating minimization
  algorithms for hybrid precoding in millimeter wave {MIMO} systems,'' {\em
  IEEE J. Sel. Topics Signal Process.}, vol.~10, pp.~485--500, Apr. 2016.

\bibitem{yedidia2000bethe}
J.~S. Yedidia, W.~T. Freeman, and Y.~Weiss, ``Generalized belief propagation,''
  {\em Proc. Adv. Neural Inform. Process. Syst.}, vol.~13, pp.~689--696,
  Denver, CO, USA, Dec. 2000.

\bibitem{xu2019what}
K.~Xu, J.~Li, M.~Zhang, S.~Du, K.~Kawarabayashi, and S.~Jegelka, ``What can
  neural networks reason about?,'' in {\em Proc. Int. Conf. Learning
  Representations}, Apr. 2020. [Online]. Available:
  https://openreview.net/forum?id=rJxbJeHFPS.

\bibitem{ngo2017cell}
H.~Q. Ngo, A.~Ashikhmin, H.~Yang, E.~G. Larsson, and T.~L. Marzetta,
  ``Cell-free massive {MIMO} versus small cells,'' {\em IEEE Trans. Wireless
  Commun.}, vol.~16, pp.~1834--1850, Mar. 2017.

\bibitem{rajapaksha2021deep}
N.~Rajapaksha, K.~Manosha, N.~Rajatheva, and M.~Latva-aho, ``Deep
  learning-based power control for cell-free massive {MIMO} networks,'' in {\em
  Proc. Int. Conf. Commun.}, pp.~1--7, Montreal, Canada, May 2021.

\bibitem{wang2019deep}
M.~Wang, D.~Zheng, Z.~Ye, Q.~Gan, M.~Li, X.~Song, J.~Zhou, C.~Ma, L.~Yu,
  Y.~Gai, {\em et~al.}, ``Deep graph library: A graph-centric,
  highly-performant package for graph neural networks,'' {\em
  arXiv:1909.01315}, 2019. [Online]. Available:
  https://arxiv.org/abs/1909.01315.

\bibitem{loukas2019graph}
A.~Loukas, ``What graph neural networks cannot learn: depth vs width,'' in {\em
  Proc. Int. Conf. Learning Representations}, Apr. 2020. [Online]. Available:
  https://openreview.net/forum?id=B1l2bp4YwS.

\bibitem{hornik1989multilayer}
K.~Hornik, M.~Stinchcombe, and H.~White, ``Multilayer feedforward networks are
  universal approximators,'' {\em Neural Netw.}, vol.~2, no.~5, pp.~359--366,
  1989.

\bibitem{arora2019fine}
S.~Arora, S.~Du, W.~Hu, Z.~Li, and R.~Wang, ``Fine-grained analysis of
  optimization and generalization for overparameterized two-layer neural
  networks,'' in {\em Proc. Int. Conf. Mach. Learning}, pp.~322--332, Long
  Beach, CA, Jul. 2019.

\bibitem{koltchinskii2002empirical}
V.~Koltchinskii and D.~Panchenko, ``Empirical margin distributions and bounding
  the generalization error of combined classifiers,'' {\em Ann. Stat.},
  vol.~30, pp.~1--50, Feb. 2002.

\bibitem{li2021heterogeneous}
Y.~Li, Z.~Chen, Y.~Wang, C.~Yang, and Y.-C. Wu, ``Heterogeneous transformer: A
  scale adaptable neural network architecture for device activity detection,''
  {\em arXiv:2112.10086}, 2021. [Online]. Available:
  https://arxiv.org/abs/2112.10086.

\bibitem{hu2020iterative}
Q.~Hu, Y.~Cai, Q.~Shi, K.~Xu, G.~Yu, and Z.~Ding, ``Iterative algorithm induced
  deep-unfolding neural networks: Precoding design for multiuser {MIMO}
  systems,'' {\em IEEE Trans. Wireless Commun.}, vol.~20, pp.~1394--1410, Feb.
  2021.

\bibitem{wang2022learning}
Z.~Wang, M.~Eisen, and A.~Ribeiro, ``Learning decentralized wireless resource
  allocations with graph neural networks,'' {\em IEEE Trans. Signal Process.},
  vol.~70, pp.~1850--1863, Mar. 2022.

\bibitem{he2022gblinks}
S.~He, S.~Xiong, W.~Zhang, Y.~Yang, J.~Ren, and Y.~Huang, ``{GBLinks}:
  {GNN-based} beam selection and link activation for ultra-dense {D2D} {mmWave}
  networks,'' {\em IEEE Trans. on Commun.}, vol.~70, pp.~3451--3466, May 2022.

\bibitem{kim2022bipartite}
J.~Kim, H.~Lee, S.-E. Hong, and S.-H. Park, ``A bipartite graph neural network
  approach for scalable beamforming optimization,'' {\em IEEE Trans. Wireless
  Commun.}, early access, 2022.

\bibitem{sohrabi2021deep}
F.~Sohrabi, K.~M. Attiah, and W.~Yu, ``Deep learning for distributed channel
  feedback and multiuser precoding in {FDD} massive {MIMO},'' {\em IEEE Trans.
  Wireless Commun.}, vol.~20, pp.~4044--4057, Jul. 2021.

\bibitem{shen2017fplinq}
K.~Shen and W.~Yu, ``{FPLinQ}: A cooperative spectrum sharing strategy for
  device-to-device communications,'' in {\em Proc. IEEE Int. Symp. Info.
  Theory}, pp.~2323--2327, Aachen, Germany, Jun. 2017.

\bibitem{qian2012distributed}
L.~P. Qian, Y.~J.~A. Zhang, and M.~Chiang, ``Distributed nonconvex power
  control using {Gibbs} sampling,'' {\em IEEE Trans. Commun.}, vol.~60,
  pp.~3886--3898, Dec. 2012.

\bibitem{itu1411}
{\em Recommendation ITU-R P.1411-8}.
\newblock International Telecommunication Union, 2015.

\bibitem{insoo13holistic}
I.~Hwang, B.~Song, and S.~S. Soliman, ``A holistic view on hyper-dense
  heterogeneous and small cell networks,'' {\em IEEE Commun. Mag.}, vol.~51,
  pp.~20--27, Jun. 2013.

\bibitem{zaheer2017deep}
M.~Zaheer, S.~Kottur, S.~Ravanbakhsh, B.~Poczos, R.~Salakhutdinov, and
  A.~Smola, ``Deep sets,'' in {\em Proc. Adv. Neural Inform. Process. Syst.},
  pp.~1--9, Long Beach, CA, USA, Dec. 2017.

\end{thebibliography}

\end{document}